\documentclass[11pt]{article}

\usepackage[T1]{fontenc}
\usepackage{lmodern}
\usepackage{microtype}
\usepackage[margin=1in]{geometry}
\usepackage{amsmath,amssymb,amsthm,mathtools}
\usepackage{aliascnt}
\usepackage{bm}
\usepackage{enumitem}
\usepackage{tikz}
\usetikzlibrary{arrows.meta,positioning,calc}
\usepackage{xcolor}
\usepackage[colorlinks=true,linkcolor=blue!55!black,citecolor=blue!55!black,urlcolor=blue!55!black]{hyperref}
\usepackage[nameinlink,capitalize,noabbrev]{cleveref}
\usepackage{tcolorbox}
\usepackage{bm}
\usepackage{mathrsfs}

\allowdisplaybreaks
\setlist[itemize]{leftmargin=1.5em,itemsep=0.25em,topsep=0.35em}
\setlist[enumerate]{leftmargin=1.7em,itemsep=0.25em,topsep=0.35em}

\newtheorem{theorem}{Theorem}[section]
  \newtheorem{result}{Result}
\newaliascnt{lemma}{theorem}
\newtheorem{lemma}[lemma]{Lemma}
\aliascntresetthe{lemma}

\newaliascnt{proposition}{theorem}

\aliascntresetthe{proposition}

\newaliascnt{corollary}{theorem}
\newtheorem{corollary}[corollary]{Corollary}
\aliascntresetthe{corollary}

\newaliascnt{claim}{theorem}

\aliascntresetthe{claim}

\theoremstyle{definition}
\newaliascnt{definition}{theorem}
\newtheorem{definition}[definition]{Definition}
\aliascntresetthe{definition}

\theoremstyle{remark}
\newaliascnt{remark}{theorem}
\newtheorem{remark}[remark]{Remark}
\aliascntresetthe{remark}

\newcommand{\eps}{\varepsilon}
\newcommand{\op}{\mathrm{op}}
\newcommand{\var}{\mathrm{Var}}

\newcommand{\N}{\mathbb N}
\newcommand{\cH}{\mathcal H}
\newcommand{\cA}{\mathcal A}

\newcommand{\ket}[1]{\lvert #1\rangle}
\newcommand{\bra}[1]{\langle #1\rvert}
\newcommand{\braket}[2]{\langle #1\mid #2\rangle}
\newcommand{\proj}[1]{\lvert #1\rangle\!\langle #1\rvert}
\newcommand{\Tr}{\operatorname{Tr}}
\newcommand{\dist}{\operatorname{dist}}
\newcommand{\supp}{\operatorname{supp}}
\newcommand{\id}{\mathrm{i}}
\newcommand{\E}{\mathbb E}

\newcommand{\R}{\mathbb R}
\newcommand{\C}{\mathbb C}
\newcommand{\norm}[1]{\left\lVert #1\right\rVert}
\newcommand{\abs}[1]{\left\lvert #1\right\rvert}
\newcommand{\ip}[2]{\left\langle #1,#2\right\rangle}
\newcommand{\dd}{\,\mathrm{d}}
\newcommand{\Pauli}{\mathcal P_n}

\title{Hamiltonian locality testing and certification\\ do not achieve the Heisenberg limit}

\author{Francisco Escudero Gutiérrez\thanks{Email: \href{mailto:francisco.escudero-gutierrez@inria.fr}{francisco.escudero-gutierrez@inria.fr}}\\ \normalsize{Inria-Saclay}  \and 
Junseo Lee\thanks{Email: \href{mailto:junseolee@fas.harvard.edu}{junseolee@fas.harvard.edu}} \\ \normalsize{Harvard University}  \and 
Sebastian Zur\thanks{Email: \href{mailto:zursebastian@gmail.com}{zursebastian@gmail.com}} \\ \normalsize{CNRS, Université Paris Cité, IRIF}}

\date{}

\begin{document}
\maketitle
\begin{abstract}
   We establish lower bounds for Hamiltonian property testing with access to the time-evolution operator but not its inverse. Each experiment may query the time-evolution operator multiple times, and distances between Hamiltonians are measured in the normalized Frobenius norm.
   
    In this model, we show that testing whether a Hamiltonian is $k$-local or $\varepsilon$-far from every $k$-local Hamiltonian requires $\Omega(1/\varepsilon^2)$ total evolution time, matching the upper bound of Kallaugher and Liang (TQC’25). We also prove that testing whether an unknown Hamiltonian equals a target   Hamiltonian or is $\varepsilon$-far from it requires $\Omega(1/\varepsilon^2)$ total evolution time, matching the upper bound of Sinha and Tong (2025). These are the first lower bounds for natural problems in Hamiltonian learning and testing that rule out Heisenberg-limited scaling of $1/\varepsilon$.
    
    As a third result, we show that amplitude estimation to precision $\varepsilon$ requires $\Omega(1/\varepsilon^2)$ total time evolution, recovering the result of Tang and Wright (QIP’26) in the continuous-time query model. All three results follow from the hardness of distinguishing the zero Hamiltonian from a suitably chosen ensemble of random Hamiltonians. We establish this hardness by adapting the continuous-time adversary method to forward Hamiltonian evolution.
\end{abstract}

\tableofcontents

\section{Introduction}
In recent years, understanding the Hamiltonians that model the interactions between particles in emerging quantum devices has become an active research topic. In particular, a rapidly growing literature studies \emph{Hamiltonian learning}, which aims to characterize an unknown Hamiltonian~\cite{Silva2011Practical,holzapfel2015scalable,Zubida2021Optimal,haah2022optimal,yu2023robust,Dutkiewicz.2023,huang2023heisenberg,li2023heisenberglimited,franca2024efficient,Gu2022Practical,zhao2025learning,hu2025ansatz,anshu2021sample,onorati2023efficient,rouze2023learning,bakshi2023learning,ma2024learning,arunachalam2025testing,caro2023learning,sinha2025improvedhamiltonianlearningsparsity,chen2025learning,abbas2025nearly,shin2026heisenberg, shin2026emergent, shin2026hamiltonian, he2026optimal, chen2025lower, lewis2026learning, zhou2026optimal,zhou2026characterizing,romanov2026learning,arad2026near,mobus2026robust}, and \emph{Hamiltonian testing}, which asks whether the Hamiltonian satisfies a given property~\cite{aharonov2022quantum,laborde2022quantum,she2022unitary,bluhm2024hamiltonian,arunachalam2025testing,kallaugher2025hamiltonian,gao2025quantum,sinha2025improvedhamiltonianlearningsparsity,bluhm2026certifying,cai2026optimal,flammia2026autonomous}. Given that the Hamiltonian determines the time evolution of the system, many of those works assume access to the Hamiltonian~$H$ via experiments like those in \cref{fig:access-model}. In those experiments, the algorithm makes several queries to the time evolution operator~$U_H(t),$ only for positive times $t>0$. One of the main focuses of this research line has been to devise protocols with total evolution time scaling like~$1/\eps$ with the precision parameter~$\eps$. Such a scaling is known as the \emph{Heisenberg limit}, and determining in which quantum scenarios it can be attained is also a topic of interest in query complexity~\cite{tang2025amplitude}. So far there are plenty of Hamiltonian learning and testing protocols that achieve the Heisenberg limit, but it is not known whether there is a problem that cannot attain the Heisenberg limit, with the access model of \cref{fig:access-model}. Thus, a natural question is 

\begin{quote}
\centering \emph{Can the Heisenberg limit always be attained in Hamiltonian learning and testing?}
\end{quote}\vspace{0.15cm}

\noindent In this work, we answer this question negatively by showing that Hamiltonian certification and locality testing do not attain the Heisenberg limit. As a byproduct, we also obtain an amplitude-estimation lower bound that holds with continuous-time forward access and strengthens the dimension dependence of the bound of Tang and Wright~\cite{tang2025amplitude}. All three results follow from the hardness of distinguishing the zero Hamiltonian from a suitably chosen ensemble of random Hamiltonians. We establish this hardness by adapting the continuous-time adversary method to forward Hamiltonian evolution, which may be of independent interest and may lead to new lower bounds.

\begin{figure}[ht!]
    \centering
    \includegraphics[width=0.9\linewidth]{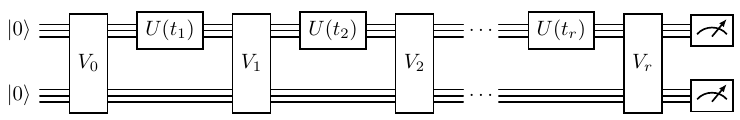}
    \caption{Experiment to access the Hamiltonian.}
    \label{fig:access-model}
\end{figure}
\subsection{Our results}
In this work, we consider $n$-qubit Hamiltonians $H$ and assume that they have bounded operator norm $\norm{H}_{\op}\leq 1$. Such a normalization is common in the literature, and allows us to focus on the scaling of the lower bounds with the precision parameter $\eps$. To define locality testing and certification, we endow the space of Hamiltonians with the distance induced by the normalized Frobenius norm, i.e. 
\begin{equation*}
    d(H,H')=\norm{H-H'}_2=\sqrt{\frac{1}{2^n}\Tr[(H-H')^2]}.
\end{equation*}
This distance can be interpreted as an average-case measure of closeness between Hamiltonians, and it was considered in prior Hamiltonian testing works~\cite{bluhm2024hamiltonian,arunachalam2025testing,kallaugher2025hamiltonian,gao2025quantum,sinha2025improvedhamiltonianlearningsparsity,bluhm2026certifying,flammia2026autonomous}. 

\subsubsection{Hamiltonian locality testing does not achieve the Heisenberg limit}
Our first result is the hardness of locality testing. The problem of locality testing is to decide whether a Hamiltonian is $k$-local or $\eps$-far from every $k$-local Hamiltonian. In this regard, we show the following result, whose formal statement can be found in \cref{cor:hardnesslocality}. 

\begin{result}\label{res:hardnesslocality}
    Let $0 < \eps < 1/4,$ $n\in \N$ and $k\in \N$. Assume that $k\leq n/8$ and $n \geq 7$. Then, an algorithm that tests whether a Hamiltonian is $k$-local or $\eps$-far from every $k$-local Hamiltonian, with constant success probability, requires $\Omega(\min\{1/\eps^2,2^{n/2}/\eps\})$ evolution time. 
\end{result}

\cref{res:hardnesslocality} is optimal, due to a matching upper bound by Kallaugher and Liang~\cite{kallaugher2025hamiltonian}. Thus, it culminates a series of works that either asked about the complexity of Hamiltonian locality testing \cite{montanaro2016survey,she2022unitary} or proposed upper bounds for the problem~\cite{bluhm2024hamiltonian,arunachalam2025testing,kallaugher2025hamiltonian}. A restriction relating of the kind $(1/\eps)\leq 2^{cn}$, for some constant $c,$ is unavoidable to prove a lower bound of~$\Omega(1/\eps^2)$ evolution time in~\cref{res:hardnesslocality}. One can always learn the Hamiltonian in normalized Frobenius norm up to error $\eps$ with $O(8^n/\eps)$ evolution time. This follows from the fact that one can learn a single Pauli coefficient with $O(1/\eps)$ evolution time (see for instance \cite{odake2025universal}).

\subsubsection{Hamiltonian certification does not achieve the Heisenberg limit}
Our second result is the hardness of Hamiltonian certification. Given a target Hamiltonian $H_0$, the problem of certification is to decide whether an unknown Hamiltonian $H$ equals $H_0$ or is $\eps$-far from $H_0$. In this regard, we show the following result, whose formal statement can be found in \cref{cor:hardnesscertification}. 

\begin{result}\label{res:hardnesscertification}
    Let $0 < \eps < 1/8$ and $n\in \N$. Assume that $n \geq 7$. Then, an algorithm that performs Hamiltonian certification with distance threshold $\eps$ and with constant success probability requires $\Omega(\min\{1/\eps^2,2^{n/2}/\eps\})$ evolution time. 
\end{result}

\cref{res:hardnesscertification} is optimal, due to a matching upper bound by Sinha and Tong~\cite{sinha2025improvedhamiltonianlearningsparsity}. Prior work showed that Heisenberg limited scaling is possible if both the target Hamiltonian $H_0$ and $H$ are assumed to be either local or sparse in the Pauli basis \cite{gao2025quantum,bluhm2026certifying}. In particular, \cite{bluhm2026certifying} proved that for Hamiltonians with constant locality, Hamiltonian certification can be done with $O(1/\eps)$ evolution time, and the authors asked whether that was possible for unstructured Hamiltonians. \cref{res:hardnesscertification} answers that question in the negative, and shows that the locality assumption is not a mere artifact of their proof. For the same reason as in the case of locality testing, an assumption of the kind $(1/\eps)=2^{O(n)}$ is also unavoidable to prove a lower bound of $\Omega(1/\eps^2)$ evolution time in~\cref{res:hardnesscertification}.

\subsubsection{Quantum amplitude estimation requires inverses, even with fractional~queries}
Our third result improves the amplitude-estimation lower bound of Tang and Wright. They showed that amplitude estimation to additive error $\eps$, using only forward queries to an $n$-qubit unitary $U$, requires $\Omega(\min\{2^n,1/\eps^2\})$ queries in the worst case. We prove the lower bound $\Omega(\min\{1/\eps^2,2^{n/2}/\eps\})$, even with continuous-time forward access to the Hamiltonian generating the unitary. In particular, it strengthens their dimension dependence in the high-precision regime $\eps<2^{-n/2}$.

More precisely, let $U := U_H(1)=e^{-\id H}$, where $\norm{H}_{\op}\leq 1$. Then each query to $U$ corresponds to one unit of evolution under $H$. We allow the algorithm to choose arbitrary forward evolution times, querying $e^{-\id Ht}$ for any $t\geq0$, and charge each query its duration $t$. More generally, our lower
bound holds in the continuous-time model of \cref{sec:conttimemodel}. Thus, total evolution time replaces the number of unitary queries as the measure of cost.

\begin{result}\label{res:hardnessofAE}
    Let $0 < \eps < 1/16$ and $n \in \N$. Assume that $n \geq 7$. Amplitude estimation to additive error $\eps$ with constant success probability requires in the worst case $\Omega(\min\{1/\eps^2,2^{n/2}/\eps\})$ total evolution time. 
\end{result}

\subsubsection{Our main technical result}
\cref{res:hardnesscertification,res:hardnesslocality,res:hardnessofAE} follow from the hardness of distinguishing the zero Hamiltonian from a random ensemble of Hamiltonians with suitable moments. More precisely, we construct a family of random diagonal Hamiltonians 
\begin{equation*}
    H=\sum_{x\in\{0,1\}^n}h_x\ket{x}\bra x
\end{equation*}
where $h_x$ are iid random variables satisfying 
\begin{equation}\label{eq:moments}
    |h_x|\leq 1,\quad \E [h_x]=0\quad \mathrm{and}\quad \E[h_x^2]=\eps^2.
\end{equation}
It is easy to see that such a family of Hamiltonians is far from the zero Hamiltonian in expectation, 
\begin{equation*}
    \E[\norm{H}_2^2]=\frac{1}{2^n}\E\left[\sum_{x\in\{0,1\}^n}h_x^2\right]=\eps^2.
\end{equation*}
From there, using concentration inequalities one can show that such a family is far from the zero Hamiltonian with constant probability, as long as $2^n\eps^2$ is a sufficiently large constant. Thus, distinguishing that random ensemble from the zero Hamiltonian is no harder than Hamiltonian certification. A similar reduction is also possible for locality testing and amplitude estimation. 

We note that taking $h_x$ to be drawn from the uniform distribution in $\{-\eps,\eps\}$ suffices to satisfy \cref{eq:moments}. However, the resulting Hamiltonian ensemble is easy to distinguish from the zero Hamiltonian. That is because for any of the Hamiltonians in the ensemble, $U_H(\pi/\eps)=-I$, while $U_0(\pi/\eps)=I$. By simulating the controlled access to $U_H(t),$ which can be done with no time evolution overhead, it is easy to distinguish these two cases with a Hadamard test (see \cref{sec:hardinst} for more details). Hence, the difficulty here is to prove that there exists a random ensemble that satisfies \cref{eq:moments} which is also difficult to distinguish from the zero Hamiltonian. Such an existence result is our main technical contribution, and we prove it via the adversary method of quantum query complexity.

\begin{result}\label{res:maintechnicalres}
    For every $0<\eps<1/2$, there exists a distribution satisfying \cref{eq:moments} such that distinguishing the induced Hamiltonian ensemble from the zero Hamiltonian requires $\Omega(1/\eps^2)$ total evolution time.
\end{result}

\subsection{Technical overview}
In this subsection, we give an idea of the proof of our main technical result, \cref{res:maintechnicalres}. First of all, we notice that when one considers Hamiltonians $F=\sum_xf(x)\ket{x}\bra{x}$ with $f:\{0,1\}^n\to\{0,1\}$, then the access model to $F$ via experiments like in \cref{fig:access-model} exactly corresponds to fractional queries to the phase oracle for $f$: the phase $e^{-\id t f(x)}$ is adjusted through the evolution time $t$. Although this connection is straightforward, to our knowledge it has not previously been used to derive lower bounds for Hamiltonian learning and testing. This connection motivates applying ideas from the literature of quantum query lower bounds to the context of Hamiltonian learning and testing, where there were no lower bounds going beyond the Heisenberg limit. In particular, the tool we borrow is the continuous-time variant of the adversary method~\cite{ambainis2000quantum,hoyer2007negative,brandeho2015universal}. Rather than invoking a general adversary lower-bound theorem, we construct the progress function and bound its derivative directly for our input distribution.

At a high level, the continuous-time access model for diagonal Hamiltonians $H=\sum_{x\in\{0,1\}^n}h_x\ket{x}\bra{x}$ works as follows. Let ${\cal Q}$ be the $n$-qubit register on which $H$ acts, let ${\cal W}$ contain the algorithm's workspace registers, and write ${\cal A} = {\cal W} \otimes {\cal Q}$. Then the input-dependent Hamiltonian, also known as the \textit{query Hamiltonian}, on these registers is
\begin{equation*}
    H_Q(h)=I_{\mathcal W}\otimes H=\sum_x h_x\bigl(I_{\mathcal W}\otimes\proj{x}\bigr).
\end{equation*}
The algorithm chooses an input-independent Hamiltonian $K(t)$, also known as the \textit{driver Hamiltonian}. The algorithm also chooses the \emph{query intensity} $\lambda(t)$, and the resulting evolution is
\begin{equation*}
\id\frac{\dd}{\dd t}\ket{\psi_H(t)}=\bigl(K(t)+\lambda(t)H_Q(h)\bigr)\ket{\psi_H(t)}.
\end{equation*}
For example, taking $K=0$ and $\lambda=1$ for a duration $t_i$ implements the query $e^{-\id t_iH}$. Taking $\lambda=0$ allows the algorithm to perform an $H$-independent operation, such as one of the unitaries $V_i$ in an experiment of \cref{fig:access-model}. A slight extension of this access model, described in \cref{sec:conttimemodel}, also accommodates adaptive choices in between the experiments. The algorithm runs from $t=0$ to $t=t_f$, and its total evolution time is given by $T=\int_0^{t_f}\lambda (t)\dd t.$

As we detail in \cref{subsec:accessmodel}, the continuous-time access model subsumes the experiment model of~\cref{fig:access-model}, so lower bounds in the continuous-time access model carry over to the model of our interest. Working in the continuous-time access model is mathematically more convenient, because the adversary method allows us to define a progress function $\mathcal P(t)$ that 
\begin{enumerate}[label=(\roman*)]
    \item increases by a constant whenever the algorithm distinguishes the zero Hamiltonian from our random ensemble with constant advantage; \label{item:intro1}
    \item satisfies $\frac{\dd}{\dd t}\mathcal{P}(t)=O(\eps^2\lambda (t))$.\label{item:intro2}
\end{enumerate}
\Cref{item:intro1} implies that if an algorithm is able to distinguish the zero Hamiltonian from the random one, then
\begin{equation*}
    \mathcal P(t_f)-\mathcal P(0)=\Omega(1).
\end{equation*}
On the other hand, by \Cref{item:intro2}, we have that 
\begin{equation*}
    \mathcal P(t_f)-\mathcal P(0)=O(\eps^2)\int_0^{t_f}\lambda(t)\dd t=O(\eps^2T).
\end{equation*}
Combining these two bounds yields the desired $T=\Omega(1/\eps^2).$ 

The construction of the progress function is rather technical, and tailored towards distinguishing the zero Hamiltonian from our chosen random ensemble. In particular, it requires \emph{purifying} the randomness of the ensemble into an extra input register, as is standard in adversary lower bounds. Once there, the goal is to find a progress function that quantifies the distinguishability of the purified evolution of the zero Hamiltonian $\ket{\Psi_Z(t)}$ and the one of the random Hamiltonian, that is \emph{far} from zero, $\ket{\Psi_F(t)}$. A natural candidate for $\mathcal P(t)$ would be $1-\mathrm{Re}(\langle \Psi_F(t)|\Psi_Z(t)\rangle)$, since it is zero initially, and at time $t_f$ the algorithm must have separated both purified evolutions, so then $1-\mathrm{Re}(\langle \Psi_F(t_f)|\Psi_Z(t_f)\rangle)=\Omega(1)$. However, directly bounding its derivative does not yield the appropriate bound. We therefore introduce a \emph{correction} term of the kind $\frac{1}{4}\bra{\Psi_F(t)}(I_{\cal A} \otimes G)\ket{\Psi_F(t)}$, where $G$ is a carefully chosen operator. For readers familiar with adversary lower bounds, the operator $G$ plays essentially the same role as choosing the adversary operator $\Gamma$. For a precise definition of our ensemble of random Hamiltonians, we refer to \cref{sec:hardinst}, and for a precise construction of $G$, we refer to \cref{sec:progress-rate}.

\subsection{Future directions}
Our work gives lower bounds for Hamiltonian locality testing, Hamiltonian certification and amplitude estimation, under mild assumptions of the kind $(1/\eps)=2^{cn}$, for some constant $c$. Some restriction of this kind is unavoidable: an $\Omega(1/\eps^2)$ lower bound cannot hold for arbitrarily small $\eps$ at fixed $n$. A natural question is how far our lower bounds extend into
the \emph{high-precision} regime. For each problem, which is the largest constant $c>0$ such that the $\Omega(1/\eps^2)$ lower bound remains valid when $\eps=2^{-cn}$? 

Our work also opens a new route to give lower bounds in Hamiltonian property testing. We wonder if the adversary method can also be used in other problems, such as testing whether a Hamiltonian is sparse in the Pauli basis, studied in \cite{arunachalam2025testing,sinha2025improvedhamiltonianlearningsparsity,bluhm2024hamiltonian}. In particular, we ask: Can the adversary method lead to an optimal lower bound for Hamiltonian sparsity testing?

\paragraph{Note added.} During the preparation of this manuscript we became aware of a related work by Allen, Caro, Chen, Gur and Lowe~\cite{allen2026power}. Their main results coincide with our lower bounds $\Omega(1/\eps^2)$ in the low-precision regime~\cref{res:hardnesscertification,res:hardnesslocality,res:hardnessofAE}. However, their proof technique is different, and, in particular, does not follow the adversary method. They also obtained extra results, such as lower bounds in the high-precision regime $\eps<2^{n/2}$. Only after they shared their draft with us, we realized that our work also implies the same lower bounds in the high-precision regime. 
\section{Preliminaries}
\subsection{Notation}
We write $[n]:=\{1,\ldots,n\}$ and $\R_+:=[0,\infty)$. Unless specified otherwise, real intervals carry Lebesgue measure and $\{0,1\}^n$ carries the uniform probability measure. In particular, $\E_x [f(x)]=2^{-n}\sum_{x\in\{0,1\}^n}f(x)$.

\paragraph{Operators.}
For Hilbert spaces $\cH$ and $\mathcal K$, let $B(\cH,\mathcal K)$ be the space of bounded linear operators from $\cH$ to $\mathcal K$, and write $B(\cH):=B(\cH,\cH)$. For finite-dimensional $\cH$, we also use $L(\cH)=B(\cH)$. We use $\norm{\cdot}$ for Hilbert-space norms and $\norm{\cdot}_{\op}$ for operator norms. For self-adjoint operators, $A\preceq B$ means that $B-A$ is positive semidefinite. All projections are orthogonal, and $[A,B]:=AB-BA$ denotes the commutator. Operators assigned to one tensor factor act as the identity on the others. The notation $\ket{0}_{\cH}$ denotes a fixed unit vector, whereas $0_{\cH}:\mathcal H\to\mathcal H$ denotes the zero operator.

\paragraph{Function spaces and measurability.}
We use the usual spaces $L^p(S,\mu)$, $1\leq p\leq\infty$, with functions identified up to equality almost everywhere, and omit $\mu$ for Lebesgue measure. We identify $f\in L^2(S,\mu)$ with the vector $\ket{f}$, and write $L^p(S;\cH)$ for Hilbert-space-valued functions.

A function into a Banach space is \emph{strongly measurable} if it is an almost-everywhere limit, in norm, of measurable simple functions. For operator-valued functions taking values in $B(\cH)$, the norm here is the operator norm. In finite dimensions this is equivalent to measurability of the matrix entries. If $F$ is strongly measurable and $\int\norm{F(t)}\dd t<\infty$, its integral is understood in the Bochner sense. In particular, if $v(t)=v(0)+\int_0^t w(s)\dd s$ with $w\in L^1([0,t_f];\cH)$, then $v$ is absolutely continuous and $\dot v(t)=w(t)$ almost everywhere.

\paragraph{Fourier analysis.}
For $S\subseteq[n]$, define the Boolean character $\chi_S(x):=(-1)^{\sum_{j\in S}x_j}$. These characters form an orthonormal basis under the uniform measure. For $f:\{0,1\}^n\to\C$, define $\widehat f(S):=\E_x\bigl[f(x)\chi_S(x)\bigr]$. Then,
\begin{equation*}
    f(x)=\sum_{S\subseteq[n]}\widehat f(S)\chi_S(x).
    \label{eq:prelim-cube-fourier}
\end{equation*}
Parseval's identity gives
\begin{equation*}
    \E_x|f(x)|^2=\sum_{S\subseteq[n]}|\widehat f(S)|^2.
    \label{eq:prelim-cube-parseval}
\end{equation*}
On $\R$, we use the Fourier transform convention
\begin{equation*}
    \widehat f(t):=\int_{\R}f(\omega)e^{-\id\omega t}\dd\omega
    \label{eq:prelim-real-fourier}
\end{equation*}
for $f\in L^1(\R)$. The transform extends to $L^2(\R)$, where Parseval's identity reads
\begin{equation*}
    \int_{\R}|\widehat f(t)|^2\dd t =2\pi\int_{\R}|f(\omega)|^2\dd\omega.
    \label{eq:prelim-plancherel}
\end{equation*}
If $f$ and $\widehat f$ are integrable, Fourier inversion gives $f(\omega)=(2\pi)^{-1}\int_{\R}\widehat f(t)e^{\id\omega t}\dd t$ almost everywhere. Functions defined on a real interval are extended by zero when taking their Fourier transform on $\R$. The convolution of two functions is
\begin{equation*}
    (f*g)(t):=\int_{\R}f(t-s)g(s)\dd s,
\end{equation*}
whenever this integral exists. For functions on $\R_+$, we use their zero extensions to $\R$; consequently, $(f*g)(t)=\int_0^t f(t-s)g(s)\dd s$ for $t\geq0$.

\subsection{Hamiltonians}\label{subsec:Hamiltonians}

An $n$-qubit Hamiltonian is a Hermitian operator on $(\C^2)^{\otimes n}$. The unknown Hamiltonians in this work satisfy $\norm{H}_{\op}\leq1$. We also use bounded self-adjoint operators on auxiliary, possibly infinite-dimensional, Hilbert spaces. The evolution under a time-independent Hamiltonian $H$ is given by $U_H(t):=e^{-\id tH}$. Thus a state $\rho$ evolves to $U_H(t)\rho U_H(t)^\dagger$ after time $t$.

\paragraph{The Pauli basis and locality.}
Let $\Pauli:=\{I,X,Y,Z\}^{\otimes n}$ be the $n$-qubit Pauli basis. For $P=P_1\otimes\cdots\otimes P_n\in\Pauli$, its support is $\supp(P):=\{j\in[n]:P_j\neq I\}$ and its weight is $|P|:=|\supp(P)|$. The Pauli operators form an orthonormal basis for the normalized Hilbert--Schmidt inner product $\ip{A}{B}_2:=2^{-n}\Tr(A^\dagger B)$. For an $n$-qubit operator $A$, define $\widehat A(P):=2^{-n}\Tr(PA)$. Its Pauli expansion is
\begin{equation*}
    A=\sum_{P\in\Pauli}\widehat A(P)P.
    \label{eq:prelim-pauli-expansion}
\end{equation*}
We use the \emph{normalized Frobenius norm} $\norm{A}_2:=\sqrt{2^{-n}\Tr(A^\dagger A)}$. Parseval's identity reads
\begin{equation*}
    \norm{A}_2^2=\sum_{P\in\Pauli}|\widehat A(P)|^2.
    \label{eq:prelim-pauli-parseval}
\end{equation*}
This normalization gives $\norm{A}_2\leq\norm{A}_{\op}$ and $\norm{I}_2=1$. The coefficients $\widehat H(P)$ of a Hamiltonian are real.

For $0\leq k\leq n$, a Hamiltonian is \emph{$k$-local} if it is a sum of Hermitian terms, each acting nontrivially on at most $k$ qubits. Equivalently, $\widehat H(P)=0$ whenever $|P|>k$. For an arbitrary operator $A$, its truncation to weight at most $k$ is
\begin{equation*}
    A_{\leq k}:=\sum_{\substack{P\in\Pauli\\|P|\leq k}}
        \widehat A(P)P.
\end{equation*}
Its component of weight greater than $k$ is
\begin{equation*}
    A_{>k}:=\sum_{\substack{P\in\Pauli\\|P|>k}}\widehat A(P)P.
\end{equation*}
These definitions also apply to time-evolution operators, which need not be Hermitian.

\paragraph{Distance from locality.}
We measure distance using the normalized Frobenius norm:
\begin{equation*}
    \dist(H,k\text{-local})
    :=\inf\bigl\{\norm{H-L}_2:
          L=L^\dagger\text{ is }k\text{-local}\bigr\}.
\end{equation*}
Orthogonality of the Pauli basis gives, for every $k$-local $L$,
\begin{equation*}
    \norm{H-L}_2^2
       =\norm{H_{>k}}_2^2+\norm{H_{\leq k}-L}_2^2.
\end{equation*}
Hence $H_{\leq k}$ is a closest $k$-local Hamiltonian, with $\dist(H,k\text{-local})=\norm{H_{>k}}_2$. Equivalently,
\begin{equation*}
    \dist(H,k\text{-local})^2 =\sum_{\substack{P\in\Pauli\\|P|>k}}|\widehat H(P)|^2.
    \label{eq:prelim-pauli-distance}
\end{equation*}

For a diagonal Hamiltonian $H=\sum_{x\in\{0,1\}^n}h_x\proj{x}$, write $h:\{0,1\}^n\to\R$ for the function $h(x)=h_x$. If $Z_S$ denotes the Pauli operator with $Z$ on $S\subseteq[n]$ and $I$ elsewhere, then $Z_S\ket{x}=\chi_S(x)\ket{x}$ and
\begin{equation*}
    H=\sum_{S\subseteq[n]}\widehat h(S)Z_S.
\end{equation*}
Here $\widehat h(S)$ is the Boolean Fourier coefficient defined above. All Pauli coefficients containing an $X$ or a $Y$ vanish. Therefore
\begin{equation*}
    \dist(H,k\text{-local})^2
       =\sum_{\substack{S\subseteq[n]\\|S|>k}}|\widehat h(S)|^2.
    \label{eq:prelim-diagonal-distance}
\end{equation*}
By Parseval's identity, this can also be written as
\begin{equation*}
    \dist(H,k\text{-local})^2
       =2^{-n}\sum_x|h_x|^2
         -\sum_{\substack{S\subseteq[n]\\|S|\leq k}}
             |\widehat h(S)|^2.
    \label{eq:prelim-diagonal-distance-parseval}
\end{equation*}
In particular, $\norm{H}_{\op}=\max_x|h_x|$ and $\norm{H}_2^2=\E_x|h_x|^2$.

\paragraph{Time-dependent evolution.}
Let $F:[0,t_f]\to B(\cH)$ be strongly measurable, self-adjoint almost everywhere, and satisfy $\ell:=\int_0^{t_f}\norm{F(s)}_{\op}\dd s<\infty$. The propagator starting at time zero is given by the \emph{Dyson expansion}
\begin{equation}
    U_F(t)=I_{\cH}+\sum_{m=1}^{\infty}(-\id)^m
       \int_{0<s_m<\cdots<s_1<t}
       F(s_1)\cdots F(s_m)\dd s_1\cdots\dd s_m.
    \label{eq:prelim-dyson}
\end{equation}
The $m$th term has operator norm at most $\ell^m/m!$, so the series converges absolutely and uniformly for $0\leq t\leq t_f$. It defines the unique absolutely continuous solution of
\begin{equation*}
    U_F(t)=I_{\cH}-\id\int_0^t F(s)U_F(s)\dd s,
\end{equation*}
and satisfies $\id\dot U_F(t)=F(t)U_F(t)$ almost everywhere. Self-adjointness of $F$ makes $U_F(t)$ unitary. These statements also hold on infinite-dimensional Hilbert spaces under the boundedness and integrability assumptions above. For a fixed bounded observable $A$ and $\ket{\psi(t)}=U_F(t)\ket{\psi(0)}$, the corresponding identity is
\begin{equation}
    \frac{\dd}{\dd t}\bra{\psi(t)}A\ket{\psi(t)}=\bra{\psi(t)}\id[F(t),A]\ket{\psi(t)}
    \label{eq:prelim-commutator-evolution}
\end{equation}
almost everywhere.

\subsection{Problem and access model}\label{subsec:accessmodel}

We consider an access model physically motivated by the role of Hamiltonians in quantum dynamics. Due to the Schrödinger equation, if the interactions of a quantum system are governed by a Hamiltonian $H,$ the states of the system evolve according to the time evolution operator $U(t)=U_H(t)=e^{-itH}$. This means that if the (mixed) state describing the system at time $0$ is $\rho,$ then the state at time $t$ is $\rho(t)=U(t)\rho U^{\dagger}(t)$. Thus, a natural access model for Hamiltonians is to perform \emph{experiments} of the following kind (see \cref{fig:access-model}): prepare a state $\rho,$ apply $U(t_1)$ (this is making a query to $U(t_1)$, which in a lab can be implemented by letting the system evolve for time $t_1$), apply a unitary operator $V_1$ independent of $H$, query $U(t_2)$, apply a unitary operator $V_2$ independent of $H$, query $U(t_3)$,$\dots$ and finally measure. We allow the use (a finite number of) ancilla qubits and to make adaptive choices, in between experiments, of both the $V_i$'s and the times of evolution. We also assume the number of queries to the time evolution operator and the number of experiments to be finite. Notably, we only allow forward time evolution, i.e. we only consider $t_i\geq 0$. This access model was previously considered in~\cite{huang2023heisenberg,bakshi2024structure,arunachalam2025testing,kallaugher2025hamiltonian,gao2025quantum,sinha2025improvedhamiltonianlearningsparsity,bluhm2026certifying,chen2025lower,shin2026emergent,sinha2025improvedhamiltonianlearningsparsity}.

Now, we are ready to state the problems of Hamiltonian locality testing and certification. 
\begin{definition}[Hamiltonian locality testing]
    Let $k,n\in\N,$ and $0<\eps<1$. Assume that we have time evolution access to an unknown $n$-qubit Hamiltonian $H$ via experiments like those of \cref{fig:access-model}. Assume that $H$ satisfies $\norm{H}_\op\leq 1$. The problem of \emph{Hamiltonian locality testing} is to decide whether $H$ is $k$-local or satisfies $\dist(H,k$-local)$>\eps$, under the promise that one of the two cases is satisfied.
\end{definition}

\begin{definition}[Hamiltonian certification]
    Let $n\in\N$ and $0<\eps<1$. Let $H_0$ be an $n$-qubit Hamiltonian that is known a priori, and satisfies $\norm{H_0}_{\op}\leq 1$ and $\Tr[H_0]=0$. Assume that we have access to an unknown $n$-qubit Hamiltonian $H$ via experiments like those of \cref{fig:access-model}. Assume that $H$ satisfies $\norm{H}_\op\leq 1$ and $\Tr[H]=0$. The problem of \emph{certifying} $H_0$ is to decide whether $H$ equals $H_0$ or $\norm{H-H_0}_2>\eps$, under the promise that one of the two cases is satisfied.\footnote{We impose that $\Tr[H]=\Tr[H_0]$ because if $H$ and $H_0$ differ by a scalar multiple of the identity, they might be far, but they would generate the same time evolution (up to a global phase).}
\end{definition}

In this work, we only care about one measure of complexity, the \emph{total evolution time}, which is the sum of the query durations across all experiments. As we allow the algorithm to make adaptive decisions, the total evolution time might vary depending on the input Hamiltonian; we consider the worst-case total evolution time $T$, i.e. the largest total evolution time of the algorithm over all possible input Hamiltonians and measurement outcomes.

\subsubsection{Continuous-time queries}\label{sec:conttimemodel}
The experiments described above give a natural way to access an unknown Hamiltonian $H$. To prove our lower bound, we use a continuous-time query model, following the Hamiltonian-oracle framework~\cite{farhi1998analog,mochon2007hamiltonian} and its adversary-method treatment~\cite{brandeho2015universal}. We show that every experiment of the form of \cref{fig:access-model} can be described in this model without increasing its total evolution time and hence a lower bound in this model applies to the experiments that we wish to study. 

Before formally introducing the continuous-time access model, we give a high-level motivation of it. We only need this description for diagonal Hamiltonians $H=\sum_{x\in\{0,1\}^n}h_x\ket{x}\bra{x}$, since these are the inputs considered in our proof. Let ${\cal Q}$ be the $n$-qubit register on which $H$ acts, let ${\cal W}$ contain the algorithm's workspace registers, and write ${\cal A} = {\cal W} \otimes {\cal Q}$. Then the input-dependent Hamiltonian, also known as the \textit{query Hamiltonian}, on these registers is
\begin{equation}\label{eq:query-H}
    H_Q(h)=I_{\mathcal W}\otimes H=\sum_x h_x\bigl(I_{\mathcal W}\otimes\proj{x}\bigr).
\end{equation}
The algorithm chooses an input-independent Hamiltonian $K(t)$, also known as the \textit{driver Hamiltonian}, to implement its input-independent operations. The algorithm also picks the \emph{query intensity} $\lambda(t)$, and the resulting evolution is
\begin{equation}\label{eq:standard-continuous}
\id\frac{\dd}{\dd t}\ket{\psi_H(t)}=\bigl(K(t)+\lambda(t)H_Q(h)\bigr)\ket{\psi_H(t)}.
\end{equation}
For example, taking $K=0$ and $\lambda=1$ for a duration $t_i$ implements the query $e^{-\id t_iH}$. Taking $\lambda=0$ allows the algorithm to perform an $H$-independent operation, such as one of the unitaries $V_i$ in an experiment. As before, these operations are uncharged: the cost is the accumulated query time $\int_0^{t_f}\lambda(t)\dd t$, rather than the duration $t_f$ of the whole description. We require $\lambda\geq0$ to retain the restriction to forward evolution.

Lastly, in an adaptive algorithm, different measurement outcomes may lead to queries of different durations. When we keep these outcomes in a workspace register, we must be able to let some branches continue querying while others wait. We therefore replace $I_{\mathcal W}\otimes\proj{x}$ in \cref{eq:query-H} by projections $P_x(t)$ that specify where the query is still active. These projections may leave part of the space inactive, which is expressed by $\sum_xP_x(t)\preceq I_{\cal A}$. The projections in the definition can be understood directly from the experiments. During an ordinary query, we take $P_x=I_{\mathcal W}\otimes\proj{x}$ and recover \cref{eq:standard-continuous}. If only some branches are querying, we instead take $P_x(t)=C(t)\otimes\proj{x}$, where $C(t)$ projects onto their measurement records in the workspace. The other branches then undergo no oracle evolution. 

Now, we are ready to formally introduce our continuous-time access model. 

\begin{tcolorbox}[
    title=Continuous-time access model for diagonal Hamiltonians,
    colback=white,
    colframe=black]
    
    Let $\mathcal Q$ be a $2^n$-dimensional complex Hilbert space, on which the unknown diagonal Hamiltonian $H=\sum_{x\in\{0,1\}^n}h_x\ket{x}\bra{x}$ acts. Let $\mathcal W$ be a complex finite-dimensional Hilbert space. Let $\cA=\mathcal W\otimes \mathcal{Q}$ and fix a normalized basis vector $\ket{0}_{\cA}$ as the initial state. A continuous-time algorithm is specified by the following parameters, all independent of the unknown $H$:
    \begin{enumerate}
        \item A \emph{final time} $t_f\in [0,\infty)$. 
        \item A strongly measurable function $K:[0,\infty)\to L(\cA)$ such that $K(t)=K(t)^\dagger$ almost everywhere, and $\int_0^{t_f}\norm{K(t)}_{\op}\dd t< \infty$. We say that $K$ is a \emph{driver Hamiltonian}.
        
        \item A \emph{query intensity} $\lambda\in L^1([0,t_f])$ that takes nonnegative values.
        \item For every $x\in\{0,1\}^n$, a strongly measurable projection-valued function $P_x:[0,\infty)\to L(\cA)$, such that for almost every $t$, $P_x(t)P_y(t)=0$ for $x\neq y$, and $\sum_xP_x(t)\preceq I_{\cA}$.
        \item A family of mutually orthogonal projectors $\{P_o\}_{o\in\mathcal{O}}$ on $\cA$ such that $\sum_{o\in\mathcal{O}} P_o=I_{\cA}.$
    \end{enumerate}
    When the input to the algorithm is the Hamiltonian $H=\sum_{x\in\{0,1\}^n}h_x\ket{x}\bra{x}$, the algorithm prepares the state $\ket{\psi_H(t_f)},$ where for $t\in [0,t_f]$ the states $\ket{\psi_H(t)}$ are the solution to  
    \begin{equation}\label{eq:continuous-integral}
  \ket{\psi_H(t)}
  =\ket{0}_\cA
  -\id\int_0^{t}
  \left(K(s)+\lambda(s)\sum_x h_xP_x(s)\right)
  \ket{\psi_H(s)}\dd s.
\end{equation}
    The algorithm performs the measurement $\{P_o\}_{o\in\mathcal O}$, and post-processes the outcome of this measurement to decide the final output. The total evolution time of the algorithm is $$T:=\int_0^{t_f}\lambda(s)\dd s.$$
\end{tcolorbox}

It remains to address two details: that the integral equation defines a valid evolution, and that the adaptive experiments admit such a description with the same worst-case time bound. The fact that the continuous time model is well-defined follows from the fact that the Dyson expansion is well-defined, as we argued in \cref{subsec:Hamiltonians}.

Now, we show that the continuous-time model subsumes the model of~\cref{fig:access-model}. Consider a protocol that performs experiments as in~\cref{fig:access-model}. We show that given any $\eps>0,$ the continuous time access model can simulate the protocol up to error $\eps$ without increasing the total evolution time. As the algorithm performs a finite number of experiments and the space on which it acts is also finite, the algorithm has a finite number of rounds of adaptivity in between experiments. In particular, there is also a finite number of times at which the algorithm queries the time evolution operation. Let $Q\in\N$ be the maximum number of queries that the protocol performs. Let $\delta:=\eps/Q$. Given one evolution time $t_i$ at which the algorithm queries $U_H$, let $m_i:=\lfloor t_i/\delta\rfloor$. Then, $|t_i-m_i\delta|\leq \eps/Q$ and $m_i\delta\leq t_i$. Substituting the query to $U_H(t_i)$ by $m_i$ queries to $U_H(\delta)$  one incurs into at most $\eps/(2Q)$ error in diamond norm. As a consequence, one can simulate the protocol up to error $\eps$ making only queries to $U_H(\delta)$ without increasing the total evolution time. 
    
It remains to exactly simulate with continuous queries a protocol whose queries all have duration~$\delta$. Each branch makes at most $N:=\lfloor T/\delta\rfloor$ such queries. By the deferred measurement principle \cite[Section~4.4]{nielsen_chuang}, we retain the outcomes coherently and may control subsequent operations on them. At each of the at most $N$ query steps, unfinished branches perform their next query, while completed branches remain inactive. In the continuous-time model, the projections $P_x(t)$ ensure that the oracle acts only on branches performing a query, leaving completed branches unchanged. Between the query steps, we set $\lambda=0$ and use the driver Hamiltonian to implement the input-independent operations. This simulation is exact and has total evolution time $N\delta\leq T$.

\subsection{Useful inequalities}\label{sec:useful-inequalities}

We recall several standard inequalities used throughout the paper.

\begin{lemma}[Markov's inequality]
\label{lem:markov}
Let $X$ be a nonnegative random variable with finite expectation.
For every $t>0$,
\[
    \Pr(X \ge t) \le \frac{\mathbb{E}[X]}{t}.
\]
\end{lemma}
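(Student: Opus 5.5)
We prove Markov's inequality by comparing $X$ pointwise with a scaled indicator function and then taking expectations. Fix $t>0$ and let $A:=\{X\geq t\}$. This is an event, since $X$ is measurable and $[t,\infty)$ is a Borel set. The key step is the pointwise inequality
\[
    t\,\mathbf 1_A \le X,
\]
which we check in two cases. On $A$ the left-hand side equals $t$, and $X\geq t$ there by definition. Off $A$ the left-hand side is $0$, and $X\geq 0$ by the nonnegativity assumption. This second case is the only place nonnegativity is used, and it is where the argument would fail for a signed random variable.

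Next we take expectations of both sides. Monotonicity of the expectation, which is valid for nonnegative integrable random variables, together with $\E[\mathbf 1_A]=\Pr(A)$, gives
\[
    t\Pr(X\geq t)\le \E[X].
\]
The right-hand side is finite by hypothesis. Dividing by $t>0$ yields the claim. There is no genuine obstacle here; the only point requiring care is the nonnegativity used in the pointwise comparison.
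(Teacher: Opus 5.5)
Your proof is correct: the pointwise bound $t\,\mathbf 1_{\{X\geq t\}}\le X$, followed by monotonicity of expectation and division by $t>0$, is the standard argument. The paper gives no proof of this lemma and only recalls it as a standard inequality, so your textbook argument is exactly what is implicitly assumed there.
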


\begin{lemma}[Chebyshev's inequality]
\label{lem:chebyshev}
Let $X$ be a real-valued random variable with mean
$\mu=\mathbb{E}[X]$ and finite variance
$\sigma^2=\mathbb{E}[(X-\mu)^2]$.
For every $t>0$,
\[
    \Pr\bigl(|X-\mu| \ge t\bigr)
    \le \frac{\sigma^2}{t^2}.
\]
\end{lemma}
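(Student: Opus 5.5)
Chebyshev's inequality (\cref{lem:chebyshev}) is a direct consequence of Markov's inequality (\cref{lem:markov}), applied to the nonnegative random variable $Y:=(X-\mu)^2$. The plan is to rewrite the two-sided deviation event in terms of $Y$ and then read off the bound.

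First, note that $Y\geq 0$ and $\E[Y]=\sigma^2<\infty$ by assumption, so \cref{lem:markov} applies to $Y$. Second, for $t>0$ the events $\{|X-\mu|\geq t\}$ and $\{Y\geq t^2\}$ coincide, since $s\mapsto s^2$ is strictly increasing on $[0,\infty)$ and $|X-\mu|\geq 0$. Third, apply Markov with threshold $t^2>0$:
\[
  \Pr\bigl(|X-\mu|\geq t\bigr)=\Pr\bigl(Y\geq t^2\bigr)\leq \frac{\E[Y]}{t^2}=\frac{\sigma^2}{t^2}.
\]

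There is no genuine obstacle. The only points to check are that $t^2$ is a valid positive threshold and that the two events are equal, not merely contained in one another; both follow from monotonicity of squaring on the nonnegative reals. If one prefers not to rely on \cref{lem:markov}, the same bound follows from the pointwise inequality $\mathbf 1\{|X-\mu|\geq t\}\leq (X-\mu)^2/t^2$, after taking expectations.
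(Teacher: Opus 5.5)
Your argument is correct and complete. Applying \cref{lem:markov} to $Y=(X-\mu)^2$ with threshold $t^2$, using $\{|X-\mu|\geq t\}=\{Y\geq t^2\}$ for $t>0$, is the standard derivation; the paper gives no proof of this lemma and only recalls it as a standard inequality alongside Markov's, so there is no paper argument for your route to differ from.
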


\begin{lemma}[Young's convolution inequality]
\label{lem:young-convolution}
Let $p,q,r\in[1,\infty]$ satisfy $1+\frac{1}{r}=\frac{1}{p}+\frac{1}{q}$. Let $f\in L^p(\mathbb{R})$ and $g\in L^q(\mathbb{R})$. Then, 
$\lVert f*g\rVert_{L^r(\R)} \le \lVert f\rVert_{L^p(\R)}\, \lVert g\rVert_{L^q(\R)}$.
\end{lemma}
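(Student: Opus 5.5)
We prove the inequality by a pointwise application of the three-factor Hölder inequality, followed by Tonelli's theorem. We may assume $f,g\ge 0$: we have $|f*g|\le |f|*|g|$ pointwise, and the norms of $f$ and $|f|$ coincide. Once we show $\norm{|f|*|g|}_{L^r}<\infty$, the integral defining $(f*g)(t)$ converges absolutely for almost every $t$ (for every $t$ when $r=\infty$), so $f*g$ is well defined. Its measurability follows from Tonelli's theorem applied to the measurable function $(t,s)\mapsto f(t-s)g(s)$ on $\R^2$.

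\textbf{Degenerate cases.} If $r=\infty$, the hypothesis reads $1/p+1/q=1$. The two-factor Hölder inequality in the variable $s$ gives
\[
(f*g)(t)\le \norm{f(t-\cdot)}_{L^p}\norm{g}_{L^q}=\norm{f}_{L^p}\norm{g}_{L^q}
\]
for every $t$, by translation and reflection invariance of Lebesgue measure. If $p=\infty$ or $q=\infty$ with $r<\infty$, the hypothesis forces the other exponent to equal $1$ and $r=\infty$, which is a contradiction. So this configuration does not arise. If $p=1$, then $r=q$, and the claim follows from Minkowski's integral inequality,
\[
\norm{\textstyle\int f(s)g(\cdot-s)\dd s}_{L^q}\le \int f(s)\norm{g}_{L^q}\dd s,
\]
which we use after the change of variables $(f*g)(t)=\int f(s)g(t-s)\dd s$. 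The case $q=1$ is symmetric, and the general argument below also covers both.

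\textbf{Main case $1\le p,q\le r<\infty$.} The hypothesis gives $p,q\le r$, and
\[
\frac1r+\Bigl(\frac1p-\frac1r\Bigr)+\Bigl(\frac1q-\frac1r\Bigr)=\frac1p+\frac1q-\frac1r=1.
\]
Set $\alpha:=\bigl(\frac1p-\frac1r\bigr)^{-1}$ and $\beta:=\bigl(\frac1q-\frac1r\bigr)^{-1}$, with the convention that an exponent equal to $\infty$ is allowed when $p=r$ or $q=r$. Then $(r,\alpha,\beta)$ is a Hölder triple. We factor the integrand as
\[
f(t-s)g(s)=\bigl(f(t-s)^p g(s)^q\bigr)^{1/r}\cdot f(t-s)^{p/\alpha}\cdot g(s)^{q/\beta}.
\]
The exponents match because $p/r+p/\alpha=1$ and $q/r+q/\beta=1$. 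Applying Hölder in $s$ yields
\[
(f*g)(t)\le \Bigl(\int f(t-s)^p g(s)^q\dd s\Bigr)^{1/r}\norm{f}_{L^p}^{p/\alpha}\norm{g}_{L^q}^{q/\beta}.
\]
Here we again used translation invariance, so that $\int f(t-s)^p\dd s=\norm{f}_{L^p}^p$.

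We raise this bound to the power $r$ and integrate in $t$. By Tonelli's theorem,
\[
\iint f(t-s)^p g(s)^q\dd s\dd t=\norm{f}_{L^p}^p\norm{g}_{L^q}^q.
\]
Therefore
\[
\norm{f*g}_{L^r}^r\le \norm{f}_{L^p}^{p}\norm{g}_{L^q}^{q}\,\norm{f}_{L^p}^{pr/\alpha}\norm{g}_{L^q}^{qr/\beta}.
\]
The total exponent of $\norm{f}_{L^p}$ is $p\bigl(1+r/\alpha\bigr)=p\cdot r/p=r$, and likewise the total exponent of $\norm{g}_{L^q}$ is $r$. Taking $r$-th roots gives $\norm{f*g}_{L^r}\le\norm{f}_{L^p}\norm{g}_{L^q}$.

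The only delicate step is choosing the three-factor splitting so that the Hölder exponents are conjugate and the powers recombine correctly. The hypothesis $1+1/r=1/p+1/q$ is used exactly there; the remaining steps are routine bookkeeping of measurability and the edge cases.
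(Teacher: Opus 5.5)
Your proof is correct and complete. The reduction to $f,g\geq0$ is sound, you correctly rule out $p=\infty$ or $q=\infty$ with $r<\infty$, and the three-factor Hölder splitting with exponents $(r,\alpha,\beta)$ checks out, including the bookkeeping $p(1+r/\alpha)=r$.

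There is nothing in the paper to compare routes against: the paper gives no proof of this lemma and simply recalls it in its list of standard inequalities. Your argument, the classical textbook proof, supplies what the paper treats as a black box.

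Note that the paper only ever uses the instance $p=1$, $q=r=2$. It needs it to get $\norm{R}_{\op}\leq\norm{\kappa}_{L^1}\leq C$ from $(Ru)(\tau)=-\id(\kappa*u)(\tau)$. For that case your Minkowski-integral-inequality step alone already suffices. So does the one-line bound $\abs{\kappa*u}^2\leq\norm{\kappa}_{L^1}\,(\abs{\kappa}*\abs{u}^2)$ from Cauchy--Schwarz, followed by Tonelli. The general Hölder argument is therefore more than the paper needs, though it gives the lemma in the full generality stated.
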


\section{Distinguishing the zero Hamiltonian and a random one is hard}
In this section, we prove our main technical result, \cref{thm:mainresult}. Our proof uses the continuous-time adversary method. We develop the argument directly for distinguishing the zero Hamiltonian from a certain ensemble of random Hamiltonians, introducing the adversary operator and progress function as they are needed, rather than presenting the general framework. This keeps the notation specific to our problem and makes it easier to see the role of each construction in the proof. For background on the continuous-time adversary method, we refer the interested reader to Brandeho--Roland~\cite{brandeho2015universal}; for broader extensions to other oracle models, see Belovs~\cite{belovs2015variations}.

We first construct the family of random Hamiltonians, and in \cref{sec:corollaries}, we show that a Hamiltonian drawn from this family is far from local and from the zero Hamiltonian with high probability. We then construct a progress function that is initially zero and we show that it must increase by a constant for any successful algorithm. By bounding its increase rate per unit of evolution time, we show that any algorithm requires $\Omega(1/\eps^2)$ total evolution time to distinguish our two cases.

\subsection{The random Hamiltonians}\label{sec:hardinst}

The hard instance will be a family of random Hamiltonians that are diagonal in the computational basis. More precisely, we consider Hamiltonians
\begin{equation*}
    H=\sum_{x\in\{0,1\}^n}h_x\ket{x}\bra{x}
\end{equation*}
where every $h_x$ is independently drawn from certain probability distribution $\nu_\eps$ taking values in $[-1,1]$ that has several useful properties. We  first define the distribution, and then prove several of its properties. 

Defining the distribution requires some notation. We consider functions $r$ of $L^2([-1/2,1/2])$, which we also treat as elements of the Hilbert space, in which case we denote them by $\ket{r}$. We introduce the self-adjoint operator $M$ acting on $L^2([-1/2,1/2])$ that maps $ r\in L^2([-1/2,1/2])$ to the function $w\to wr(w)$, i.e. $(Mr)(w)=wr(w).$
Note that $\norm{M}_\op\leq 1/2$. We also consider the \textit{bump function} $b\in L^2([-1/2,1/2])$ defined by 
\begin{equation}
    b(w):=\dfrac{\widetilde b(w)}{\norm{\widetilde b}_{L^2([-1/2,1/2])}},\quad\text{where}\quad  \widetilde b(\omega):=
  \begin{cases}
    \exp\!\left(-\dfrac{1}{1-16\omega^2}\right),& |\omega|<1/4,\\[1.2ex]
    0,& |\omega|\geq 1/4.
  \end{cases} \label{eq:bumpfunction} 
\end{equation}
Given $0 < \eps < 1/2,$ we define the operator $J_\eps$ acting on $\text{span}_\C\{\ket{\perp}\}\oplus L^2([-1/2,1/2])$,\footnote{$\ket{\perp}$ just denotes a unit vector of $\ell_2(\mathbb C)$.} as 
\begin{equation}\label{eq:blockJ}
    J_\eps:=\begin{pmatrix}
        0 & \eps\bra b\\ \eps\ket b & M
    \end{pmatrix}
\end{equation}
Note that $\norm{J_\eps}_\op\leq 1,$ because 
\begin{equation*}
    \norm{J_\eps}_{\op}\leq \norm{\begin{pmatrix}
        0 & \eps\bra b\\ \eps\ket b & 0
    \end{pmatrix} }_\op+\norm{ M}_\op\leq \eps+\dfrac{1}{2}\leq 1.
\end{equation*}
Now, we are ready to define the distribution $\nu_\eps$.
\begin{definition}[The hard distribution]\label{def:harddist}
    We define $\nu_\eps$ as the unique Borel probability measure on $[-1,1]$ such that 
    \begin{equation*}
        \bra \perp f(J_\eps)\ket \perp=\int_{-1}^1f(h)\dd\nu_\eps(h),
    \end{equation*}
    for every continuous function $f:[-1,1]\to\mathbb C.$
\end{definition}
The distribution $\nu_\eps$ is well defined thanks to the facts that $\langle \perp|\perp\rangle=1$ and $\norm{J_{\eps}}_\op\leq 1$, and the Riesz-Markov representation theorem (see e.g.~\cite{ii1975methods}). Now, we state and prove the properties of the distribution that allow us to ensure that the random Hamiltonian ensemble is far from local and from the zero Hamiltonian with constant probability.
 
\begin{lemma}[Properties of the hard distribution that ensure farness]\label{lem:propertiesharddist}
    Let $1/2 > \eps>0$. The probability distribution $\nu_\eps$ from \cref{def:harddist} satisfies
    \begin{enumerate}[label=(\roman*)]
        \item \label{item:1} $\mathbb E_{h\sim \nu_\eps}[h]=0,$
        \item \label{item:2} $\mathbb E_{h\sim \nu_\eps}[h^2]=\eps^2$.  
    \end{enumerate}
\end{lemma}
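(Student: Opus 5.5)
The plan is to apply \cref{def:harddist} with the polynomial test functions $f(h)=h$ and $f(h)=h^2$. Both are continuous on $[-1,1]$, so
\begin{equation*}
    \E_{h\sim\nu_\eps}[h]=\bra\perp J_\eps\ket\perp,
    \qquad
    \E_{h\sim\nu_\eps}[h^2]=\bra\perp J_\eps^2\ket\perp .
\end{equation*}
Here $f(J_\eps)$ for a polynomial $f$ coincides with the continuous functional calculus of the bounded self-adjoint operator $J_\eps$. Since $\norm{J_\eps}_\op\leq1$, the spectrum of $J_\eps$ lies in $[-1,1]$, so $f(J_\eps)$ is indeed the ordinary matrix power. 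The two claims thus reduce to computing the first two moments of $J_\eps$ in the state $\ket\perp$.

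For \cref{item:1}, I read off from the block form \eqref{eq:blockJ} that the $(\perp,\perp)$ entry of $J_\eps$ is $0$. Hence $\bra\perp J_\eps\ket\perp=0$.

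For \cref{item:2}, I compute $J_\eps\ket\perp=\eps\ket b$, which is the first column of the block matrix. It follows that
\begin{equation*}
    \bra\perp J_\eps^2\ket\perp=\norm{J_\eps\ket\perp}^2=\eps^2\braket{b}{b}=\eps^2 ,
\end{equation*}
using that $J_\eps$ is self-adjoint and that $b$ is normalized in $L^2([-1/2,1/2])$ by construction in \eqref{eq:bumpfunction}. Equivalently, expanding $J_\eps^2$ blockwise, its top-left entry is $0\cdot0+\eps\bra b\cdot\eps\ket b=\eps^2$.

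There is essentially no obstacle. The only points needing care are that $J_\eps$ is self-adjoint, since $M$ is real multiplication and the off-diagonal blocks are adjoints of each other, and that $\nu_\eps$ is supported in $[-1,1]$, which justifies using $f(h)=h,h^2$ as continuous test functions on that interval.
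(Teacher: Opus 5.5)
Your proposal is correct and follows the paper's proof exactly: both items come from plugging $f(h)=h$ and $f(h)=h^2$ into \cref{def:harddist}, then using $\bra\perp J_\eps\ket\perp=0$ and $\bra\perp J_\eps^2\ket\perp=\eps^2\braket{b}{b}=\eps^2$. Your remarks on self-adjointness and the functional calculus are harmless extra detail that the paper leaves implicit.
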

\begin{proof}
    \cref{item:1} follows from the fact that $\bra \perp J_\eps\ket \perp=0$. \cref{item:2} follows from $\bra \perp J_\eps^2\ket\perp= \eps^2\langle b|b\rangle$ and $\norm{b}_{L^2([-1/2,1/2])}=1.$
\end{proof}

As we will see in \cref{sec:corollaries}, the properties from \cref{lem:propertiesharddist} suffice to establish farness, but not for the hardness. For example, if one picked $h_x$ as a uniformly random variable taking values in $\{-\eps,\eps\},$ then the random Hamiltonian will still satisfy the properties of \cref{lem:propertiesharddist}, but it will be distinguishable from the zero Hamiltonian with evolution time $O(1/\eps)$. Indeed, note that in the case $|h_x|=\eps$, we have that $U_H(\pi/\eps)=-I_{\C^{2^n}}$, while $U_0(\pi/\eps)=I_{\C^{2^n}}$. Then, the following protocol, that assumes access to the controlled time evolution, distinguishes both cases with probability 1: prepare $\ket {+}\otimes \ket{0}_{\C^{2^n}}$, apply $cU_H(\pi/\eps)$, and measure the first qubit in the $\{\ket +,\ket -\}$ basis. In the case that $U_H(\pi/\eps)=I_{\C^{2^n}}$, the output is $\ket +$, while in the case $U_H(\pi/\eps)=-I_{\C^{2^n}}$, the output is $\ket -$. Finally, the protocol can be simulated just with non-controlled time evolution up to arbitrary precision without incurring in any total time evolution overhead, thanks to the controlization techniques of~\cite{odake2025universal}. Our choice of distribution provides the following additional estimates that allow us to establish hardness of distinguishing the zero Hamiltonian from the random ensemble. We will not state these estimates in terms of the distribution itself, but instead state them in terms of $M$ and $\ket {b}$, which determine the distribution $\nu_\eps.$

\begin{lemma}[Properties of the hard distribution used by the adversary]\label{lem:propertieshardextra}
Let $b\in L^2([-1/2,1/2])$ be the bump function defined in~\cref{eq:bumpfunction}. Then, there are constants $C,C_{\rm obs}<\infty$, such that
\begin{align}
\int_0^\infty\abs{\bra{b}e^{-\id M\tau}\ket{b}}\dd \tau&\leq C,\label{eq:hard-integrable}\\
\int_0^\infty\abs{\bra{b}e^{-\id M\tau}\ket{r}}^2\dd \tau&\leq C_{\rm obs}\norm{\ket{r}}^2\label{eq:hard-observation}
\end{align}
for every $\ket{r}\in L^2([-1/2,1/2])$. 
\end{lemma}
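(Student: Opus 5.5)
Both estimates reduce to Fourier analysis on $\R$, because $M$ is multiplication by $w$. For any $\ket{r}\in L^2([-1/2,1/2])$ we have
\[
\bra{b}e^{-\id M\tau}\ket{r}=\int_{-1/2}^{1/2}\overline{b(w)}\,r(w)\,e^{-\id w\tau}\dd w=\widehat{\,\overline{b}\,r\,}(\tau),
\]
where $\overline b r$ is extended by zero to $\R$ and the Fourier convention of the preliminaries is used. Since $b$ is real, $\overline b=b$. Both integrals over $[0,\infty)$ are bounded by the corresponding integrals over $\R$, so we may work on the whole line.

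For \cref{eq:hard-integrable}, we take $r=b$, so the quantity in question is the Fourier transform of $b^2$. The function $\widetilde b$ is the standard $C^\infty$ bump function with compact support in $(-1/4,1/4)$. Hence $b^2\in C_c^\infty(\R)$, and its Fourier transform decays faster than any polynomial. Concretely, integrating by parts twice gives $|\widehat{b^2}(\tau)|\le \|(b^2)''\|_{L^1}/\tau^2$ for $\tau\neq 0$. Combined with the trivial bound $|\widehat{b^2}(\tau)|\le\|b\|^2=1$, this yields
\[
\int_0^\infty|\widehat{b^2}(\tau)|\dd\tau\le 1+\|(b^2)''\|_{L^1}=:C<\infty .
\]
The only thing to verify is smoothness at $|\omega|=1/4$, which is the classical fact that $e^{-1/(1-16\omega^2)}$ vanishes to infinite order there.

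For \cref{eq:hard-observation}, Plancherel in the stated normalization gives
\[
\int_\R|\widehat{br}(\tau)|^2\dd\tau=2\pi\int|b(w)r(w)|^2\dd w\le 2\pi\|b\|_{L^\infty}^2\|r\|^2 .
\]
Here $\|b\|_{L^\infty}=e^{-1}/\|\widetilde b\|_{L^2}<\infty$, so we can take $C_{\rm obs}=2\pi\|b\|_\infty^2$. This step is routine.

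The main (mild) obstacle is the decay estimate in the first part. One has to justify the integration by parts with vanishing boundary terms, which follows from compact support and smoothness. If desired, one can instead bound $\|(b^2)''\|_{L^1}$ by a Sobolev-type argument, but only finiteness is needed. No earlier result beyond the Fourier conventions is required.
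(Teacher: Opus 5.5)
Your proposal is correct and follows the paper's proof essentially step for step. The first bound uses the trivial estimate $\abs{\bra{b}e^{-\id M\tau}\ket{b}}\leq 1$ on $[0,1]$ together with two integrations by parts giving $\tau^{-2}\norm{(b^2)''}_{L^1}$ on $[1,\infty)$, and the second uses Parseval applied to $\overline{b}r$ extended by zero, with the same constant $C_{\rm obs}=2\pi\norm{b}_{L^\infty}^2$. Your justification of the vanishing boundary terms, via $b$ being compactly supported inside $(-1/2,1/2)$, is slightly more careful than the paper's. The paper additionally notes that $\overline{b}r\in L^1(\R)$, so that the pointwise Fourier integral coincides with the $L^2$ transform used in Parseval.
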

\begin{proof}
We first prove \cref{eq:hard-integrable}. We start by dividing the integral in two summands, 
\begin{equation*}
    \int_0^\infty\abs{\bra{b}e^{-\id M\tau}\ket{b}}\dd \tau=\int_0^1\abs{\bra{b}e^{-\id M\tau}\ket{b}}\dd \tau+\int_1^\infty\abs{\bra{b}e^{-\id M\tau}\ket{b}}\dd \tau
\end{equation*}
Since $\norm{b}_{L^2([-1/2,1/2])}=1$, we have $\abs{\bra{b}e^{-\id M\tau}\ket{b}}\leq1$, so the first summand is bounded by 1. To bound the second summand, we note that by integrating by parts twice and using that $b(\pm 1/2)=0$, we obtain
\[
\bra{b}e^{-\id M\tau}\ket{b}=\int_{-1/2}^{1/2}\abs{b(\omega)}^2e^{-\id\omega \tau}\dd\omega=\frac{1}{(\id \tau)^2}\int_{-1/2}^{1/2}(\abs{b}^2)''(\omega)e^{-\id\omega \tau}\dd\omega \qquad(\tau\neq0).
\]
Thus, $\abs{\bra{b}e^{-\id M\tau}\ket{b}}\leq \tau^{-2}\norm{(\abs{b}^2)''}_{L^1([-1/2,1/2])}$. As $b$ is smooth, $C'=\norm{(\abs{b}^2)''}_{L^1([-1/2,1/2])}$ is a constant. Thus, we can upper bound the second summand by constant
\begin{equation*}
    \int_1^\infty\abs{\bra{b}e^{-\id M\tau}\ket{b}}\dd \tau\leq C'\int_1^\infty \tau^{-2}d\tau=C',
\end{equation*} which concludes the proof of \cref{eq:hard-integrable}.

For \cref{eq:hard-observation}, note that $$\bra{b}e^{-\id M\tau}\ket{r}=\int_{-1/2}^{1/2}e^{-iw\tau}\bar b(w)r(w)dw=\int_{-\infty}^{\infty}e^{-iw\tau} g(w)dw,$$ where $g(w)=\bar b(w)r(w)$ for $|w|\leq 1/2$ and $g(w)=0$ for $|w|\geq 1/2$. Thus, $\bra{b}e^{-\id M\tau}\ket{r}$ is the Fourier transform of $\overline b r$, extended by zero outside $[-1/2,1/2]$. Since $b\in L^{\infty}([-1/2,1/2])$ and $r\in L^2([-1/2,1/2])$, we have that $g\in L^2(\mathbb R)$, with
\[
\norm{g}_{L^2(\mathbb R)}\leq\norm{b}_{L^\infty([-1/2,1/2])}\norm{r}_{L^2([-1/2,1/2])}.
\]
It also belongs to $L^1(\mathbb R)$: by Cauchy--Schwarz and the normalization of $b$,
\[
\norm{g}_{L^1(\mathbb R)}\leq\norm{b}_{L^2([-1/2,1/2])}\norm{r}_{L^2([-1/2,1/2])}=\norm{r}_{L^2([-1/2,1/2])}.
\]
The Fourier integral is therefore well-defined, and Parseval's identity gives
\[
\int_0^\infty\abs{\bra{b}e^{-\id M\tau}\ket{r}}^2\dd \tau\leq2\pi\norm{g}_{L^2(\mathbb R)}^2\leq2\pi\norm{b}_{L^\infty([-1/2,1/2])}^2\norm{r}_{L^2([-1/2,1/2])}^2,
\]
which proves the bound with $C_{\rm obs}=2\pi\norm{b}_{L^\infty([-1/2,1/2])}^2$, which is a constant because $b$ is smooth. 
\end{proof} 

To explain why \cref{lem:propertieshardextra} is useful, we briefly anticipate the analysis used in the adversary argument in \cref{sec:progress-rate}. We represent each random entry $h_x$ by an auxiliary register initialized in $\ket{\perp}$, with $J_{\eps}$ from \cref{eq:blockJ} acting on that register whenever the algorithm queries the entry. The identity $J_{\eps}\ket{\perp}=\eps\ket{b}$ explains why the overlaps with $\ket{b}$ appear in the proof. A term proportional to $\ket{b}$ introduced at time $\sigma$ contributes to the later overlap through $\bra{b}e^{-\id M(\tau-\sigma)}\ket{b}$. \cref{eq:hard-integrable} lets us bound the combined effect of such terms introduced at different times.
\cref{eq:hard-observation} controls the contribution from a vector already present at the start of the interval. Both bounds are independent of the interval's duration. We use them in \cref{sec:progress-rate} to construct the correction that bounds the algorithm's progress by $O(\eps^2)$ per unit of query time.

\begin{remark}[The uniform distribution does not have the properties of \cref{lem:propertieshardextra}]
    As we explained above, if we had picked $\nu_\eps$ to be the uniform distribution on $\{-\eps,\eps\},$ we would not be able to prove a lower bound better than $\Omega(1/\eps)$. In particular, one could not hope to have a result such as \cref{lem:propertieshardextra}, so our proof would not work. That is because to obtain the uniform distribution as the distribution $\dd\nu_\eps'$ such that
    \begin{equation*}
        \bra \perp f( J_\eps')|\perp\rangle=\int_{-1}^1f(x)\dd\nu_\eps'(x),
    \end{equation*}
    for every $f:[-1,1]\to\C$, we would have to pick $J'_\eps$ as 
    \begin{equation*}
        J'_\eps=\begin{pmatrix}
        0 & \eps\bra{ b'}\\
         \eps\ket{b'} & M'
        \end{pmatrix},
    \end{equation*}
    where $b'=1$ and $M'=0$. In that case, we would have 
    \begin{align*}
        \int_0^\infty\abs{\bra{b'}e^{-\id M'\tau}\ket{b'}}\dd \tau&=\int_0^\infty1\dd \tau=\infty,\\
        \int_0^\infty\abs{\bra{b'}e^{-\id M'\tau}\ket{b'}}^2\dd \tau&=\int_0^\infty1\dd \tau=\infty,
    \end{align*}
ruling out both bounds of \cref{lem:propertieshardextra}.
\end{remark}

\subsection{The progress function}

In this subsection, we design a progress function tailored to measure the progress of an algorithm that distinguishes $H=0$ from our family of random Hamiltonians $H=\sum_x h_x\ket{x}\bra{x}$ with $h_x$ being independently drawn from $\nu_\eps$.

\subsubsection{Purifying the randomness} 
In the adversary method, it is standard to model the input distribution by a superposition in an extra register, thereby \emph{purifying} the randomness of the input. This lets us follow the algorithm and input jointly as a single pure state, avoiding the need of the density matrix formalism. The progress function will then be defined acting (only) on this extra register.

Fix an algorithm of the form from \cref{sec:conttimemodel}, so that on input $H=\sum_x h_x\ket{x}\bra{x}$, the algorithm at time $t$ prepares the state $\ket{\psi_H(t)}$ determined via
\begin{equation*}
  \frac{\dd}{\dd t}\ket{\psi_H(t)}
  =
  -\id
  \left(K(t)+\lambda(t)\sum_x h_xP_x(t)\right)
  \ket{\psi_H(t)},\quad\text{and}\quad \ket{\psi_H(0)}=\ket{0}_{\cA}.
  \label{eq:continuous-evolution}
\end{equation*}

We introduce one input register ${\cal I}_x$ for each entry $h_x$ to model its randomness:
\begin{equation}\label{eq:input-register}
\mathcal I:=\bigotimes_{x\in\{0,1\}^n}\mathcal I_x, \qquad \mathcal I_x:=\C\ket{\perp}_x\oplus L^2([-1/2,1/2]), \qquad \ket{\Omega}:=\bigotimes_x\ket{\perp}_x.
\end{equation}

We use $J_\eps^{(x)}$ to denote the operator that acts as $J_\eps$ on register $\mathcal I_x$ and identity on $\mathcal I_y$ for $y\neq x$. Then, define the joint oracle as
\begin{equation*}
    \mathbf J_\eps(t):=\sum_{x\in\{0,1\}^n}P_x(t)\otimes J_\eps^{(x)}.
\end{equation*}
The algorithm has no access to the register ${\cal I}$, except through this joint oracle.

Now, we define the purified time evolutions of the \emph{zero} and \emph{far} instances as the states $\ket{\Psi_{Z}(t)}, \ket{\Psi_{F}(t)}, \in {\cal A} \otimes {\cal I}$ satisfying 
\begin{align}
    \frac{\dd}{\dd t}\ket{\Psi_{Z}(t)}
  &=
  -\id
  \left(K(t)\otimes I_{\mathcal I}\right)
  \ket{\Psi_{Z}(t)},\quad\text{and}\quad \ket{\Psi_{Z}(0)}=\ket{0}_{\cA}\otimes \ket{\Omega},\label{eq:purifiedevol}\\
  \frac{\dd}{\dd t}\ket{\Psi_{F}(t)}
  &=
  -\id
  \left(K(t)\otimes I_{\mathcal I}+\lambda(t)\mathbf J_\eps(t)\right)
  \ket{\Psi_{F}(t)},\quad\text{and}\quad \ket{\Psi_{F}(0)}=\ket{0}_{\cA}\otimes \ket{\Omega},\nonumber
\end{align}
respectively. We check that these joint evolutions are well-defined, since $\mathcal I$ is infinite-dimensional. Their Hamiltonians are finite sums of fixed bounded self-adjoint operators with measurable real coefficients. Moreover, orthogonality of the $P_x(t)$ gives $\norm{\mathbf J_\eps(t)}_{\op}\leq1$, so their norms are bounded by the integrable function $\norm{K(t)}_{\op}+\lambda(t)$. The existence and uniqueness argument from \cref{sec:conttimemodel} therefore applies here as well.

The following lemma formally shows that, from the algorithm's perspective, the purified evolution with the extra register $\mathcal I$ is equivalent to drawing $H$ at random once and using that same Hamiltonian throughout.
\begin{lemma}[The input register represents a fixed random Hamiltonian]\label{lem:purification}
For every $t\in[0,t_f]$,
\begin{equation}\label{eq:purification}
\Tr_{\mathcal I}\bigl(\proj{\Psi_F(t)}\bigr)=\mathbb E_{h\sim\nu_\eps^{\otimes 2^n}}\bigl[\proj{\psi_H(t)}\bigr].
\end{equation}
Also, $\ket{\Psi_{Z}(t)}=\ket{\psi_0(t)}\otimes\ket{\Omega}$, where $\ket{\psi_0(t)}$ is the algorithm state for $H=0$.
\end{lemma}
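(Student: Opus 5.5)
The zero-Hamiltonian claim is immediate. The evolution of $\ket{\Psi_Z}$ is generated by $K(t)\otimes I_{\mathcal I}$ alone, and $\ket{\psi_0(t)}\otimes\ket{\Omega}$ solves the same integral equation with the same initial condition. By uniqueness of solutions (the Dyson expansion of \cref{subsec:Hamiltonians}) the two coincide.

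For the far case, I will decompose $\ket{\Psi_F(t)}$ along the spectral resolution of the commuting operators $J_\eps^{(x)}$. The $J_\eps^{(x)}$ act on different tensor factors, so they commute pairwise and with $K(t)\otimes I_{\mathcal I}$ and $P_x(t)\otimes I$. By the spectral theorem, there is a joint projection-valued measure $E$ on $[-1,1]^{2^n}$, the product of the spectral measures $E_x$ of $J_\eps^{(x)}$, with $J_\eps^{(x)}=\int h_x\,\dd E(h)$. Then $\bra{\Omega}E(\cdot)\ket{\Omega}=\bigotimes_x\bra{\perp}E_x(\cdot)\ket{\perp}=\nu_\eps^{\otimes 2^n}$ by \cref{def:harddist}, since the spectral measure of $J_\eps$ in state $\ket\perp$ is exactly $\nu_\eps$.

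Next I will define the candidate state
\[
\ket{\Phi(t)}:=\int \ket{\psi_h(t)}\otimes \dd E(h)\ket{\Omega},
\]
understood via functional calculus. In the cleanest form, let $V(t):=\int U_h(t)\otimes \dd E(h)$, where $U_h(t)$ is the algorithm's propagator on $\mathcal A$ for input $h$. Then $\ket{\Phi(t)}=V(t)(\ket{0}_{\mathcal A}\otimes\ket{\Omega})$. I will check that $\ket{\Phi}$ satisfies the integral equation for $\ket{\Psi_F}$. Inside the spectral integral, $\mathbf J_\eps(t)$ acts as $\sum_xh_xP_x(t)$, so differentiating under the integral gives exactly the generator $K(t)+\lambda(t)\sum_x h_xP_x(t)$ fibrewise. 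Uniqueness then gives $\ket{\Psi_F(t)}=\ket{\Phi(t)}$.

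Finally, I will compute the partial trace. For any operator $A$ on $\mathcal A$,
\[
\bra{\Psi_F(t)}(A\otimes I)\ket{\Psi_F(t)}=\int\bra{\psi_h(t)}A\ket{\psi_h(t)}\,\dd\bra{\Omega}E(h)\ket{\Omega}=\E_{h\sim\nu_\eps^{\otimes2^n}}\bra{\psi_H(t)}A\ket{\psi_H(t)},
\]
which is \cref{eq:purification}.

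The main obstacle is making the fibrewise argument rigorous in infinite dimensions. This means showing that $h\mapsto U_h(t)$ is measurable (indeed continuous) and that the spectral integral commutes with the time integral in the Dyson series. To avoid this, I would first try a purely algebraic route. Expand both sides via the Dyson series, where each term is a polynomial in the $h_x$, respectively in the $J_\eps^{(x)}$. Moments then match because $\bra{\Omega}\prod_x (J_\eps^{(x)})^{k_x}\ket{\Omega}=\prod_x\E[h_x^{k_x}]$ by \cref{def:harddist} and the tensor structure. Absolute convergence, with bounds $(\ell)^m/m!$ and $\norm{J_\eps}_{\op}\le1$, justifies exchanging sums, integrals and expectations.
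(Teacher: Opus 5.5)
Your proof is correct. The paper gives no self-contained argument here: it only points to the Brandeho--Roland input-register construction ``adapted to our continuous distribution''. Your spectral route is exactly that adaptation made explicit. On the cyclic subspace of $\ket{\perp}$, the spectral theorem turns $J_\eps$ into multiplication by $h$ on $L^2(\nu_\eps)$, with $\ket{\perp}\mapsto 1$. So $\ket{\Psi_F(t)}$ becomes the fibred state $h\mapsto\ket{\psi_h(t)}$, the continuous version of the discrete purification $\sum_h\sqrt{p_h}\,\ket{\psi_h(t)}\ket{h}$.

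The obstacle you flag is mild. $\mathcal A$ is finite-dimensional, and $h\mapsto U_h(t)$ is a uniform limit of polynomials on $[-1,1]^{2^n}$: each Dyson term is polynomial in $h$ and bounded by $\ell^m/m!$ uniformly, since $\norm{\sum_x h_xP_x(s)}_{\op}\le 1$. Hence $V(t)$ can be defined entrywise by bounded functional calculus. Exchanging the time integral with the spectral integral is then ordinary Fubini for the finite scalar measures $\bra{\xi}E(\cdot)\ket{\zeta}$.

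Your Dyson-series fallback is a genuinely different and more elementary argument. It never needs a joint spectral measure. It uses only three facts: $\bra{\perp}J_\eps^k\ket{\perp}=\E[h^k]$ (the definition of $\nu_\eps$ with polynomial $f$), commutativity of the $J_\eps^{(x)}$, and independence of the $h_x$.

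To use it for the reduced state, expand both $U_F$ and $U_F^\dagger$ in $\bra{\Psi_F}(A\otimes I)\ket{\Psi_F}$. The $\mathcal I$-factor of each term is then $\bra{\Omega}B'^\dagger B\ket{\Omega}$, where $B,B'$ are products of the commuting self-adjoint $J_\eps^{(x)}$, so it is a joint moment. Match terms order by order in $(m,m')$ and only then sum, so the $\ell^m/m!$ bounds justify all exchanges directly.

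The spectral route gives the structural picture that the purified state is literally a coherent superposition of classical runs, which is the viewpoint the adversary literature relies on. The moment route gives up that picture for a short, purely algebraic check, made possible because $\nu_\eps$ is supported in $[-1,1]$, so every moment is bounded by $1$ and all series converge absolutely.
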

\begin{proof}
This is the input-register construction in the proof of \cite[Theorem~3.1]{brandeho2015universal}, adapted to our continuous distribution.
\end{proof}

\subsubsection{The definition of the progress function}\label{subsec:defprogfun}

In this section, we define the progress function. Regardless of whether the input is the zero Hamiltonian or our random family of Hamiltonians, the algorithm starts in the same joint state $\ket{\Psi_{Z}(0)} = \ket{\Psi_F(0)}$. A successful algorithm must eventually distinguish these two states, so this suggests using the quantity $1-\operatorname{Re}\braket{\Psi_{Z}(t)}{\Psi_F(t)}$ to measure progress. To obtain the desired bound on its rate of change however, we add a \emph{correction} term, and define the progress function as
\begin{equation}\label{eq:progress}
    {\cal P}(t) := 1-\operatorname{Re}\braket{\Psi_{Z}(t)}{\Psi_F(t)} + \frac{1}{4}\bra{\Psi_F(t)}(I_{\cal A} \otimes G)\ket{\Psi_F(t)}.
\end{equation}
Here, $G$ is a bounded nonnegative operator on the input register $\mathcal{I}$, satisfying $G\ket{\Omega} = 0$. 
The fact that $G$ is nonnegative and that $G\ket{\Omega} = 0$, suffices to ensure that ${\cal P}(t) \geq 0$ and ${\cal P}(0) = 0$. It is also enough to prove
that any successful algorithm must generate a lot of progress.  More precisely, in~\cref{sec:progress-end} we prove that for a successful algorithm
\begin{equation}\label{eq:mustprogress}
    {\cal P}(t_f) = \Omega(1).
\end{equation}
However, a more precise construction of $G$ in the correction term is necessary to make sure that the progress can not grow too large with every continuous query. In \cref{sec:progress-rate}, we give a precise construction of $G$ and prove that it satisfies  
\begin{equation}\label{eq:cannotprogresstoomuch}
    \frac{\dd}{\dd t}{\cal P}(t) \leq O(\eps^2\lambda(t)).
\end{equation}
Now, putting \cref{eq:mustprogress,eq:cannotprogresstoomuch} together we have that the total time evolution $T$ of a successful algorithm must satisfy
\begin{equation*}
    \Omega(1)=\mathcal P(t_f)=O(\eps^2)\int_0^{t_f}\lambda(t)dt=O(T\eps^2).
\end{equation*}
Rearranging, we obtain $T=\Omega(1/\eps^2)$, as desired.

\subsection{A successful algorithm must progress}\label{sec:progress-end}
In this section, we show that a successful algorithm that distinguishes the random ensemble from the zero Hamiltonian, must satisfy that at the final time the progress function is large.

\begin{lemma}\label{lem:successful-progress}
If an algorithm satisfies $\norm{\ket{\Psi_F(t_f)}-\ket{\Psi_{Z}(t_f)}}_{\cal AI}=\Omega(1)$, then its corresponding progress function satisfies ${\cal P}(t_f)= \Omega(1)$.
\end{lemma}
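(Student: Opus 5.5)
The plan is to drop the correction term and compare the remaining part of $\mathcal P(t_f)$ directly with the squared distance of the two purified states.

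First, $G$ is nonnegative, so $\frac14\bra{\Psi_F(t_f)}(I_{\cal A}\otimes G)\ket{\Psi_F(t_f)}\geq 0$. Hence
\[
\mathcal P(t_f)\geq 1-\operatorname{Re}\braket{\Psi_Z(t_f)}{\Psi_F(t_f)}.
\]

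Second, both purified evolutions in \cref{eq:purifiedevol} are generated by self-adjoint (time-dependent) Hamiltonians. By the discussion of the Dyson expansion in \cref{subsec:Hamiltonians}, the propagators are unitary, so $\ket{\Psi_Z(t_f)}$ and $\ket{\Psi_F(t_f)}$ are unit vectors, since they start at the unit vector $\ket{0}_{\cA}\otimes\ket{\Omega}$. Expanding the norm then gives the identity
\[
\norm{\ket{\Psi_F(t_f)}-\ket{\Psi_Z(t_f)}}^2=2-2\operatorname{Re}\braket{\Psi_Z(t_f)}{\Psi_F(t_f)}.
\]

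Combining the two steps yields
\[
\mathcal P(t_f)\geq \tfrac12\norm{\ket{\Psi_F(t_f)}-\ket{\Psi_Z(t_f)}}^2.
\]
Under the hypothesis the distance is at least some constant $c>0$, so $\mathcal P(t_f)\geq c^2/2=\Omega(1)$.

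There is no real obstacle here. The only point needing care is norm preservation on the infinite-dimensional space $\cA\otimes\mathcal I$, which is covered by the well-posedness remarks already made after \cref{eq:purifiedevol}. Translating "successful algorithm" into the distance hypothesis, presumably via \cref{lem:purification} and a trace-distance argument, is not part of this lemma.
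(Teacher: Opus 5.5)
Your proof is correct and is the same as the paper's: drop the nonnegative correction term using $G\succeq 0$, then apply the identity $\norm{\ket{\Psi_F(t_f)}-\ket{\Psi_Z(t_f)}}^2=2\bigl(1-\operatorname{Re}\braket{\Psi_Z(t_f)}{\Psi_F(t_f)}\bigr)$ for unit vectors. Your explicit justification of norm preservation through unitarity of the propagators is a detail the paper leaves implicit.
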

\begin{proof}
We abbreviate $\ket{\Delta\Psi}:=\ket{\Psi_F(t_f)}-\ket{\Psi_{Z}(t_f)}$. Thus, $\langle\Delta\Psi|\Delta\Psi\rangle=\Omega(1)$.  If we expand $\langle\Delta\Psi|\Delta\Psi\rangle$, we notice that
\begin{align*}
    \langle\Delta\Psi|\Delta\Psi\rangle = 2 - \braket{\Psi_{Z}(t_f)}{\Psi_F(t_f)} - \braket{\Psi_F(t_f)}{\Psi_{Z}(t_f)} = 2(1-\operatorname{Re}\braket{\Psi_{Z}(t_f)}{\Psi_F(t_f)}).
\end{align*}
Hence, since $G\succeq0$, we conclude that
\[
{\cal P}(t_f)\geq 1-\operatorname{Re}\braket{\Psi_{Z}(t_f)}{\Psi_F(t_f)} = \frac{\langle\Delta\Psi|\Delta\Psi\rangle}{2} = \Omega(1).
\]
\end{proof}

\subsection{The progress growth is bounded}\label{sec:progress-rate}
In this section, we give a formal definition of the operator $G$ announced in \cref{subsec:defprogfun} and we prove that the derivative of the progress function is bounded. The definition of $G$ requires some work. We first introduce the operators
\begin{align*}
    O&:L^2([-1/2,1/2])\to L^2(\R_+),\\
    R,D&:L^2(\R_+)\to L^2(\R_+).
\end{align*} 
The operators $O$ and $R$ describe the two contributions to the overlap with $\ket{b}$ discussed after \cref{lem:propertieshardextra}: the contribution from an initial vector and that from terms added during the evolution. Define
\begin{align}
    (Or)(\tau)&:=\bra{b}e^{-\id M\tau}|r\rangle,\nonumber\\
    (Ru)(\tau)&:=-\id\int_0^\tau
    \bra{b}e^{-\id M(\tau-\sigma)}\ket{b}u(\sigma)\dd\sigma,
    \label{eq:OR-definition}
\end{align}
where $b\in L^2([-1/2,1/2])$ and $M\in B(L^2([-1/2,1/2]))$ are those defined in \cref{sec:hardinst}.

The bounds in \cref{lem:propertieshardextra} ensure that both $O$ and $R$ are bounded. \cref{eq:hard-observation} directly gives $\norm{O}_{\op}^2\leq C_{\rm obs}$. To bound $R$, we define $\kappa(\tau):=\bra{b}e^{-\id M\tau}\ket{b}$ for $\tau\geq0$, and extend both $\kappa$ and $u$ by zero on negative arguments. Then $(Ru)(\tau)=-\id(\kappa*u)(\tau)$ for $\tau\geq0$, so Young's convolution inequality and
\cref{eq:hard-integrable} yield 
\[
    \norm{Ru}_{L^2(\R_+)} \leq \left(\int_0^\infty\abs{\bra{b}e^{-\id M\tau}\ket{b}}\dd\tau\right) \norm{u}_{L^2(\R_+)} \leq C\norm{u}_{L^2(\R_+)}.
\]
In other words, $\norm{R}_{\op}\leq C$.

Now, we can introduce the operator $D$, which we define as 
\begin{equation*}
    \eta^2:=1+\norm{R}_{\op}^2, \qquad D:=\eta^2 I_{L^2(\R_+)}-R^\dagger R.
\end{equation*}
We have defined $D$ so that it is self-adjoint, and satisfies $D\succeq I_{L^2(\R_+)}$, so $D^{-1}$ exists and $\norm{D^{-1}}_{\op}\leq1$. We use this $D$ to define the nonnegative operator $g$, that will act on a single input register in ${\cal I}$. Define
\[
    g:\mathbb C \ket \perp \oplus L^2([-1/2,1/2])\to \mathbb C \ket \perp \oplus L^2([-1/2,1/2])
\]
by
\begin{equation}\label{eq:g-definition}
    g:=0\oplus\left(O^\dagger O+O^\dagger RD^{-1}R^\dagger O\right).
\end{equation}
We use lowercase $g$ to distinguish this operator from the full correction $G$. The reason for this choice of~$g$ will become clear below: we prove an identity for $\bra{r}g\ket{r}$ that lets us bound the change in its expectation during evolution under $J_{\eps}$. Here we identify $\ket{r}\in L^2([-\frac{1}{2},\frac{1}{2}])$ with $0\ket{\perp}+\ket{r}$.

We recall that $\mathcal I_x=\mathbb C \ket \perp \oplus L^2([-1/2,1/2])$ and that $\mathcal I=\otimes_{x\in\{0,1\}^n}\mathcal I_x$. For each $x\in\{0,1\}^n$, let $g_x$ act as $g$ on $\mathcal I_x$ and as the identity on the other input registers. Define
\begin{equation}\label{eq:G-definition}
    G:=\sum_{x\in\{0,1\}^n}g_x
\end{equation}
on $\mathcal I$. By construction, every $g_x$ is nonnegative, so $G$ is also nonnegative. Moreover,  $\norm{G}_{\op}\leq2^n\norm{g}_{\op}<\infty$, and $G\ket{\Omega}=0$, as required in \cref{subsec:defprogfun}, where we recall that $\ket\Omega=\otimes_{x}\ket\perp$. Hence, it remains to show that with this choice of $G$ the derivative of the progress function is bounded. This requires a bit of work, and we will need to prove a few intermediate lemmas.

\subsubsection{Bounding the change in the correction term}

Consider a vector $q(\tau)\ket{\perp}+\ket{r(\tau)}$ evolving under $J_{\eps}$, with $q(\tau)\in\C$ and $\ket{r(\tau)}\in L^2([-\frac{1}{2},\frac{1}{2}])$, meaning that $$\id\frac{\dd}{\dd\tau}(q(\tau)\ket{\perp}+\ket{r(\tau)})=J_\eps(q(\tau)\ket{\perp}+\ket{r(\tau)}).$$
By \cref{eq:blockJ},
\begin{equation}\label{eq:revolution}
    \id\frac{\dd}{\dd\tau}q(\tau)=\eps\braket{b}{r(\tau)},
    \qquad
    \id\frac{\dd}{\dd\tau}\ket{r(\tau)}
    =M\ket{r(\tau)}+\eps q(\tau)\ket{b}.
\end{equation}
To see how $O$ and $R$ describe this evolution, temporarily replace $\eps q(\tau)$ by an arbitrary function $u\in L^2(\R_+)$. Setting $\ket{r(0)}=\ket{r}$, the second equation has solution
\begin{equation}\label{eq:forced-evolution}
    \ket{r_u(\tau)}
    =e^{-\id M\tau}\ket{r}
    -\id\int_0^\tau e^{-\id M(\tau-\sigma)}\ket{b}u(\sigma)\dd\sigma.
\end{equation}
By \cref{eq:OR-definition,eq:forced-evolution}, we have
\begin{equation}\label{eq:overlapwithb}
    \braket{b}{r_u(\tau)}=(O\ket{r})(\tau)+(Ru)(\tau).
\end{equation}
We want the expectation of $g$ to decrease when this overlap is large, apart from a contribution proportional to $\abs{u(\tau)}^2$. For the actual evolution, $u(\tau)=\eps q(\tau)$, so that contribution will have the required factor $\eps^2$. To achieve this, we use the following lemma.
\begin{lemma}\label{lem:g-variational}
    For every $\ket{r}\in L^2([-\frac{1}{2},\frac{1}{2}])$, the operator $g$ from \cref{eq:g-definition} satisfies
    \begin{equation}\label{eq:g-variational}
        \bra{r}g\ket{r} 
        =\sup_{u\in L^2(\R_+)}
        \left(
            \norm{O\ket{r}+Ru}_{L^2(\R_+)}^2
            -\eta^2\norm{u}_{L^2(\R_+)}^2
        \right).
    \end{equation}
    Furthermore, there exists $u\in L_2(\R_+)$ that attains the supremum.
\end{lemma}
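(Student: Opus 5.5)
This is a finite-dimensional-style quadratic maximization, carried out on the Hilbert space $L^2(\R_+)$. Write $v:=O\ket{r}\in L^2(\R_+)$; this is legitimate because $\norm{O}_{\op}^2\leq C_{\rm obs}$ by \cref{eq:hard-observation}. For $u\in L^2(\R_+)$, expand the objective:
\begin{equation*}
    \Phi(u):=\norm{v+Ru}^2-\eta^2\norm{u}^2
    =\norm{v}^2+2\operatorname{Re}\braket{R^\dagger v}{u}-\bra{u}D\ket{u},
\end{equation*}
where we used $D=\eta^2 I-R^\dagger R$. Since $D\succeq I$ is bounded, self-adjoint and invertible with $\norm{D^{-1}}_{\op}\leq 1$, $\Phi$ is a strictly concave quadratic functional. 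The plan is to complete the square.

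Set $u_\star:=D^{-1}R^\dagger v\in L^2(\R_+)$, which is well defined because $R$ and $D^{-1}$ are bounded. Then $R^\dagger v=Du_\star$, and using self-adjointness of $D$ we get, for every $u$,
\begin{equation*}
    \Phi(u)=\norm{v}^2+\bra{u_\star}D\ket{u_\star}-\bra{u-u_\star}D\ket{u-u_\star}.
\end{equation*}
This is verified by expanding the last term and cancelling the cross terms $2\operatorname{Re}\bra{u_\star}D\ket{u}=2\operatorname{Re}\braket{R^\dagger v}{u}$. Since $D\succeq I\succeq 0$, the last term is nonnegative and vanishes exactly at $u=u_\star$. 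Hence the supremum is attained at $u_\star$, which gives the existence claim, and its value is
\begin{equation*}
    \norm{v}^2+\bra{v}RD^{-1}DD^{-1}R^\dagger\ket{v}=\bra{r}O^\dagger O\ket{r}+\bra{r}O^\dagger RD^{-1}R^\dagger O\ket{r}.
\end{equation*}
Here we used that $D^{-1}$ is self-adjoint.

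Finally, compare this value with \cref{eq:g-definition}. Identifying $\ket{r}$ with $0\ket{\perp}+\ket{r}$, the block form $g=0\oplus(O^\dagger O+O^\dagger RD^{-1}R^\dagger O)$ gives exactly $\bra{r}g\ket{r}$, which is \cref{eq:g-variational}.

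The only point needing care is that all manipulations take place in infinite dimensions. This is handled by the boundedness of $O$, $R$ and $D^{-1}$, established right after \cref{eq:OR-definition}, together with self-adjointness of $D$ (immediate from its definition). No compactness argument is needed, since the maximizer is explicit, so I do not expect a genuine obstacle here.
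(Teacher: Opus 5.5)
Your proposal is correct and follows essentially the same route as the paper: expand the objective using $D=\eta^2 I-R^\dagger R$, complete the square around $u_\star=D^{-1}R^\dagger O\ket{r}$, and use $D\succeq I$ to obtain both the value of the supremum and its attainment at $u_\star$. The only difference is notational: your $v$ denotes $O\ket{r}$, whereas the paper's $v$ denotes $R^\dagger O\ket{r}$.
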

\begin{proof}
    We manipulate the argument inside the supremum of \cref{eq:g-variational}. We recall that $D=\eta^2I-R^\dagger R$. Then,
    \begin{align*}
        \norm{O\ket{r}+Ru}^2_{L^2(\R_+)} -\eta^2\norm{u}^2_{L^2(\R_+)}&=\langle r|O^\dagger O|r\rangle+2\operatorname{Re}\langle Ru|Or\rangle+\langle Ru|Ru\rangle-\eta^2\langle u|u\rangle\\
        &=\langle r|O^\dagger O|r\rangle +2\operatorname{Re}\langle u|R^\dagger Or\rangle -\langle u|Du\rangle.
    \end{align*}
    Now, we add and subtract $\langle r|O^\dagger RD^{-1}R^\dagger O |r\rangle$ and we arrive at 
    \begin{align*}
        \norm{O\ket{r}+Ru}^2_{L^2(\R_+)} -\eta^2\norm{u}^2_{L^2(\R_+)}=&\langle r|O^\dagger O+O^\dagger RD^{-1}R^\dagger O|r\rangle\\
        &-\langle r|O^\dagger RD^{-1}R^\dagger O|r\rangle+2\operatorname{Re}\langle u|R^\dagger O|r\rangle -\langle u|Du\rangle.
    \end{align*}
    Now, we recall the definition of $g$ given in \cref{eq:g-definition} and set $v:=R^\dagger O\ket r,$ and get 
    \begin{align*}
        \norm{O\ket{r}+Ru}^2_{L^2(\R_+)} -\eta^2\norm{u}^2_{L^2(\R_+)}=&\langle r|g|r\rangle-\langle v|D^{-1}|v\rangle+2\operatorname{Re}\langle u|v\rangle -\langle u|Du\rangle.
    \end{align*}
    By identifying a square it follows that 
    \begin{align*}
        \norm{O\ket{r}+Ru}^2_{L^2(\R_+)} -\eta^2\norm{u}^2_{L^2(\R_+)}=&\langle r|g|r\rangle-\langle u-D^{-1}v|D|u-D^{-1}v\rangle.
    \end{align*}
    Since $D\succeq 0$, it follows that $\langle r|g|r\rangle$ is greater or equal than the supremum in \cref{eq:g-variational}. The equality follows by taking $u=D^{-1}v\in L^2(\R_+)$.
\end{proof}

To use this result for the algorithm, we need a bound on the derivative of the correction. If $\ket{\psi(\tau)}=e^{-\id J_{\eps}\tau}\ket{\psi}$, then
\[
    \frac{\dd}{\dd\tau}\bra{\psi(\tau)}g\ket{\psi(\tau)}
    =\bra{\psi(\tau)}\id[J_{\eps},g]\ket{\psi(\tau)}.
\]
Thus the commutator is simply the operator expressing this derivative. The following lemma bounds it in a form that we can apply to each input register.

\begin{lemma}\label{lem:single-correction}
    The operator $g$ from \cref{eq:g-definition} satisfies
    \begin{equation}\label{eq:single-commutator}
        \id[J_{\eps},g]\preceq\eta^2\eps^2\proj{\perp}-\proj{b}.
    \end{equation}
\end{lemma}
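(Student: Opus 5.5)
The plan is to verify \cref{eq:single-commutator} as a quadratic-form inequality. Fix $\ket{\psi}=q\ket{\perp}+\ket{r}$ and let $\ket{\psi(\tau)}=e^{-\id J_\eps\tau}\ket{\psi}=q(\tau)\ket{\perp}+\ket{r(\tau)}$ evolve according to \cref{eq:revolution}. Since $g$ is bounded, $\tau\mapsto\bra{\psi(\tau)}g\ket{\psi(\tau)}$ is differentiable with derivative $\bra{\psi(\tau)}\id[J_\eps,g]\ket{\psi(\tau)}$. It therefore suffices to show that at $\tau=0$ this derivative is at most $\eta^2\eps^2|q|^2-|\braket{b}{r}|^2$. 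Because $g=0\oplus(\cdots)$, we have $\bra{\psi(\tau)}g\ket{\psi(\tau)}=\bra{r(\tau)}g\ket{r(\tau)}$, so only the $L^2$ component matters. I will prove, for small $h>0$,
\[
\bra{r(h)}g\ket{r(h)}-\bra{r}g\ket{r}\leq \eta^2\eps^2\int_0^h|q(\sigma)|^2\dd\sigma-\int_0^h|\braket{b}{r(\sigma)}|^2\dd\sigma ,
\]
then divide by $h$ and let $h\to0$, using continuity of $q$ and of $\ket{r(\cdot)}$.

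The key step is a dynamic-programming argument based on \cref{lem:g-variational}. By that lemma, there is a maximizer $u_h\in L^2(\R_+)$ for the supremum defining $\bra{r(h)}g\ket{r(h)}$. Build a concatenated forcing $u$ for the initial vector $\ket{r}$: set $u(\sigma)=\eps q(\sigma)$ for $\sigma\in[0,h]$ and $u(\sigma)=u_h(\sigma-h)$ for $\sigma>h$. By \cref{eq:forced-evolution}, the forced trajectory $\ket{r_u(\sigma)}$ coincides with the true $\ket{r(\sigma)}$ on $[0,h]$, since the true evolution is the forced one with $u=\eps q$. By time-invariance of the forced evolution (the semigroup property of $e^{-\id M\tau}$), for $\tau\geq h$ it coincides with the trajectory started at $\ket{r(h)}$ and forced by $u_h(\cdot-h)$. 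Using \cref{eq:overlapwithb} on both pieces gives $(O\ket{r}+Ru)(\tau)=(O\ket{r(h)}+Ru_h)(\tau-h)$ for $\tau\ge h$, and $(O\ket{r}+Ru)(\tau)=\braket{b}{r(\tau)}$ for $\tau\le h$.

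Plugging this $u$ into the variational formula \cref{eq:g-variational} for $\bra{r}g\ket{r}$ yields
\[
\bra{r}g\ket{r}\geq \int_0^h|\braket{b}{r(\tau)}|^2\dd\tau+\norm{O\ket{r(h)}+Ru_h}^2-\eta^2\eps^2\int_0^h|q|^2-\eta^2\norm{u_h}^2 .
\]
The last bound holds since $\norm{u}^2=\eps^2\int_0^h|q|^2+\norm{u_h}^2$. The terms involving $u_h$ combine to $\bra{r(h)}g\ket{r(h)}$, which is exactly the displayed inequality. The main obstacle is checking the splicing identity rigorously, namely that the convolution operator $R$ and the observation operator $O$ respect the time shift, which amounts to a short manipulation of \cref{eq:forced-evolution} splitting $\int_0^\tau=\int_0^h+\int_h^\tau$. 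I also need $u\in L^2(\R_+)$, which is immediate because $q$ is bounded on $[0,h]$. Finally, the limit $h\to0$ needs $q$ and $\braket{b}{r(\cdot)}$ to be continuous, which holds since $J_\eps$ is bounded. This gives $\bra{\psi}\id[J_\eps,g]\ket{\psi}\leq\eta^2\eps^2|q|^2-|\braket{b}{r}|^2$ for every $\ket{\psi}$, which is \cref{eq:single-commutator}.
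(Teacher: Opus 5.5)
Your proposal is correct and is essentially the paper's own proof. For fixed $h>0$ you take the maximizer $u_h$ from \cref{lem:g-variational} for $\bra{r(h)}g\ket{r(h)}$, splice it after the true forcing $\eps q$ on $[0,h]$ to get a competitor for $\bra{r}g\ket{r}$, and split the norm using the time-shift identity (obtained, as in the paper, by splitting the Duhamel integral at $h$ and recognizing $\ket{r(h)}$); you then divide by $h$ and let $h\to 0$.
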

\begin{proof}
    Let $\ket{\psi}=q\ket{\perp}+\ket r$ with $\ket r\in L^2([-1/2,1/2])$. We define $\ket{\psi(\tau)}:=e^{-\id J_\eps\tau}\ket{\psi}$. Then, we have that 
    \begin{equation*}
        \langle\psi|\id[J_{\eps},g]|\psi\rangle=\left.\frac{\dd}{\dd \tau}\langle \psi(\tau)|g|\psi(\tau)\rangle\right|_{\tau=0}.
    \end{equation*}
    We define $q(\tau)$ and $\ket{r(\tau)}$ as those satisfying $\ket{\psi(\tau)}=q(\tau)\ket{\perp}+\ket{r(\tau)}$. As $g\ket{\perp}=0$, it follows that  \begin{equation*}
        \langle\psi|\id[J_{\eps},g]|\psi\rangle=\left.\frac{\dd}{\dd \tau}\langle r(\tau)|g|r(\tau)\rangle\right|_{\tau=0}=\lim_{\tau\to 0}\frac{\langle r(\tau)|g|r(\tau)\rangle-\langle r|g|r\rangle}{\tau}.
    \end{equation*}
    In order to give an upper bound on $\langle r(\tau)|g|r(\tau)\rangle-\langle r|g|r\rangle$,
    we fix $\tau>0$. By \cref{lem:g-variational}, there is $u\in L^2(\R_+)$ such that 
    \begin{equation*}
        \langle r(\tau)|g|r(\tau)\rangle\leq \norm{O\ket{r(\tau)}+Ru}_{L^2(\R_+)}^2
            -\eta^2\norm{u}_{L^2(\R_+)}^2.
    \end{equation*}
    Now, we define $\ket{r_u}$ as 
    \begin{equation*}
        \ket{r_u(\sigma)}
    :=e^{-\id M\sigma}\ket{r(\tau)}
    -\id\int_0^\sigma e^{-\id M(\sigma-\mu)}\ket{b}u(\mu)\dd\mu.
    \end{equation*}
    Then, by \cref{eq:overlapwithb} it follows that 
    \begin{equation}\label{eq:upperboundrtau}
         \langle r(\tau)|g|r(\tau)\rangle\leq \int_0^\infty \abs{\langle b|r_u(\sigma)\rangle}^2
            -\eta^2\abs{u(\sigma)}^2 \dd \sigma.
    \end{equation}
    Now, we give a complementary lower bound on $\langle r|g|r\rangle$ by defining an appropriate $u'$ and evaluating the supremum of \cref{eq:g-variational}. We define $u'\in L^2(\R_+)$ as 
    \begin{equation*}
        u'(\sigma)=\left\{\begin{array}{ll}
            \eps q(\sigma), & \sigma\leq \tau, \\
            u(\sigma-\tau) & \sigma>\tau.
        \end{array}\right.
    \end{equation*}
    We also define $\ket{r_{u'}}$ as 
    \begin{equation*}
        \ket{r_{u'}(\sigma)}
    :=e^{-\id M\sigma}\ket{r}
    -\id\int_0^\sigma e^{-\id M(\sigma-\mu)}\ket{b}u'(\mu)\dd\mu.
    \end{equation*}
    Again, by \cref{lem:g-variational,eq:overlapwithb}, we have that 
    \begin{equation}\label{eq:lowerboundr}
        \langle r|g|r\rangle\geq \int_0^\infty \abs{\langle b|r_{u'}(\sigma)\rangle}^2
            -\eta^2\abs{u'(\sigma)}^2\dd \sigma.
    \end{equation}
    Now, we manipulate $\ket{r_{u'}(\sigma)}$ for $\sigma\geq \tau$ to arrive at
    \begin{align}\nonumber
        \ket{r_{u'}(\sigma)}&=e^{-\id M\sigma}\ket{r}
    -\id\int_\tau^\sigma e^{-\id M(\sigma-\mu)}\ket{b}u(\mu-\tau)\dd\mu-\id\int_0^\tau e^{-\id M(\sigma-\mu)}\ket{b}\eps q(\mu)\dd\mu\\ \nonumber
    &=e^{-\id M\sigma}\ket{r}-\id \int_0^{\sigma-\tau} e^{-\id M((\sigma-\tau)-\mu)}\ket{b}u(\mu)\dd\mu-\id\int_0^\tau e^{-\id M(\sigma-\mu)}\ket{b}\eps q(\mu)\dd\mu\\ \nonumber
    &=e^{-iM(\sigma-\tau)}\left(e^{-iM\tau}\ket{r}-\id\int_0^\tau e^{-\id M(\tau-\mu)}\ket{b}\eps q(\mu)\dd\mu\right)-\id \int_0^{\sigma-\tau} e^{-\id M((\sigma-\tau)-\mu)}\ket{b}u(\mu)\dd\mu\\ \nonumber
    &=e^{-iM(\sigma-\tau)}\ket{r(\tau)}-\id \int_0^{\sigma-\tau} e^{-\id M((\sigma-\tau)-\mu)}\ket{b}u(\mu)\dd\mu\\
    &=\ket{r_u(\sigma-\tau)},\label{eq:ruvsru'}
    \end{align}
    where in the second line we have performed a change of variables, in the fourth line we have identified $\ket{r(\tau)}$ via \cref{eq:forced-evolution}, and in the last line we have used the definition of $\ket{r_u}$. 

    Now, by \cref{eq:upperboundrtau,eq:lowerboundr,eq:ruvsru'}, we have that 
    \begin{align*}
        \langle r(\tau)|g|r(\tau)\rangle-\langle r|g|r\rangle&\leq \int_0^\infty \abs{\langle b|r_u(\sigma)\rangle}^2
            -\eta^2\abs{u(\sigma)}^2\dd\sigma-\int_0^\infty \abs{\langle b|r_{u'}(\sigma)\rangle}^2
            -\eta^2\abs{u'(\sigma)}^2\dd\sigma\\
            &=\int_0^\infty \abs{\langle b|r_u(\sigma)\rangle}^2
            -\eta^2\abs{u(\sigma)}^2\dd\sigma\\
            &\quad -\int_\tau^\infty \abs{\langle b|r_{u}(\sigma-\tau)\rangle}^2
            -\eta^2\abs{u(\sigma-\tau)}^2\dd\sigma-\int_0^\tau \abs{\langle b|r_{u'}(\sigma)\rangle}^2
            -\eta^2\abs{u'(\sigma)}^2\dd\sigma\\
            &=\int_0^\tau \eta^2\eps^2\abs{q(\sigma)}^2-\langle r_{u'}(\sigma) |b\rangle \langle b|r_{u'}(\sigma)\rangle
            \dd\sigma.
    \end{align*}
    Now, we use that $\ket{r(\sigma)}=\ket{r_{u'}(\sigma)}$ for $\sigma\leq \tau,$ and arrive at 
    \begin{align*}
        \frac{\langle r(\tau)|g|r(\tau)\rangle-\langle r|g|r\rangle}{\tau}\leq \frac{\int_0^\tau\eta^2\eps^2\abs{q(\sigma)}^2- \langle r(\sigma) |b\rangle \langle b|r(\sigma)\rangle
            \dd\sigma}{\tau}.
    \end{align*}
    Hence, taking the limit $\tau\to 0,$ we conclude that 
    \begin{equation*}
        \left.\frac{\dd}{\dd \tau}\langle r(\tau)|g|r(\tau)\rangle \right|_{\tau=0}\leq 
            \eta^2\eps^2\abs{q}^2-\langle r|b\rangle \langle b|r\rangle=\bra{\psi}(\eta^2\eps^2\ket{\perp}\bra{\perp}-|b\rangle \langle b|)\ket{\psi},
    \end{equation*}
    as desired.
\end{proof}

\subsubsection{Bounding the progress of the algorithm}

We now have all the tools to bound the progress ${\cal P}(t)$.

\begin{lemma}\label{lem:progress-rate}
    For $G$ from \cref{eq:G-definition}, the progress function in \cref{eq:progress} is absolutely continuous and satisfies
    \[
        \frac{\dd}{\dd t}\mathcal P(t)
        \leq C_{\rm prog}\eps^2\lambda(t)
    \]
    for almost every $t\in[0,t_f]$, and we have $C_{\rm prog}:=1+\frac{\eta^2}{4}$.
\end{lemma}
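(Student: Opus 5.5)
We differentiate the two terms of $\mathcal P(t)$ separately and compare them. Both $\ket{\Psi_Z(t)}$ and $\ket{\Psi_F(t)}$ are absolutely continuous, since they solve integral equations whose generators have integrable norms. The inner product of two absolutely continuous vector functions is absolutely continuous, and so is $\bra{\Psi_F(t)}(I_{\cA}\otimes G)\ket{\Psi_F(t)}$, because $G$ is bounded. This gives absolute continuity of $\mathcal P$, together with almost-everywhere derivatives given by the product rule.

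\textbf{Overlap term.} The driver parts cancel because $K(t)\otimes I_{\mathcal I}$ is self-adjoint and appears in both evolutions. One is left with
\[
\frac{\dd}{\dd t}\braket{\Psi_Z(t)}{\Psi_F(t)}=-\id\lambda(t)\bra{\Psi_Z(t)}\mathbf J_\eps(t)\ket{\Psi_F(t)}.
\]
By \cref{lem:purification}, $\ket{\Psi_Z(t)}=\ket{\psi_0(t)}\otimes\ket{\Omega}$. Since $J_\eps^{(x)}$ is self-adjoint and $J_\eps\ket{\perp}=\eps\ket b$, we get $\bra{\Omega}J_\eps^{(x)}=\eps\bra{b}_x\otimes\bigotimes_{y\neq x}\bra{\perp}_y$. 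Let $\Pi_x:=\proj{b}_x$ act on register $\mathcal I_x$. Using $P_x^2=P_x$, $\Pi_x^2=\Pi_x$ and Cauchy--Schwarz on each summand, then Cauchy--Schwarz over $x$ and $\sum_x\norm{P_x\psi_0}^2\le1$, we obtain
\[
\Bigl|\frac{\dd}{\dd t}\operatorname{Re}\braket{\Psi_Z}{\Psi_F}\Bigr|\le \eps\lambda(t)\sum_x\norm{P_x\ket{\psi_0(t)}}\,\norm{(P_x\otimes\Pi_x)\ket{\Psi_F(t)}}\le \frac{\lambda(t)}{4}\sum_x\norm{(P_x\otimes\Pi_x)\Psi_F}^2+\eps^2\lambda(t)\sum_x\norm{P_x\psi_0}^2 .
\]
The last step is AM--GM in the form $\eps ab\le \eps^2a^2+b^2/4$, summed over $x$. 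The final term is at most $\eps^2\lambda(t)$.

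\textbf{Correction term.} By \cref{eq:prelim-commutator-evolution}, the derivative of $\frac14\bra{\Psi_F}(I\otimes G)\ket{\Psi_F}$ equals
\[
\frac14\bra{\Psi_F}\id[K\otimes I+\lambda\mathbf J_\eps,\,I\otimes G]\ket{\Psi_F}.
\]
The $K$ part commutes with $I\otimes G$. The register $\mathcal I_x$ is acted on only by $J_\eps^{(x)}$ and $g_x$, so $[J_\eps^{(x)},g_y]=0$ for $y\neq x$. Hence
\[
\id[\mathbf J_\eps,I\otimes G]=\sum_x P_x\otimes \id[J_\eps,g]_x .
\]
Since $P_x\succeq0$ commutes with the second tensor factor, \cref{lem:single-correction} gives
\[
P_x\otimes\id[J_\eps,g]_x\preceq P_x\otimes(\eta^2\eps^2\proj{\perp}_x-\Pi_x).
\]
Therefore this derivative is at most
\[
\frac{\lambda}{4}\Bigl(\eta^2\eps^2\sum_x\norm{(P_x\otimes\proj{\perp}_x)\Psi_F}^2-\sum_x\norm{(P_x\otimes\Pi_x)\Psi_F}^2\Bigr).
\]
The first sum is at most $\sum_x\bra{\Psi_F}P_x\otimes I\ket{\Psi_F}\le1$.

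Adding the two contributions, the $\frac{\lambda}{4}\sum_x\norm{(P_x\otimes\Pi_x)\Psi_F}^2$ terms cancel exactly. What remains is $(1+\eta^2/4)\eps^2\lambda(t)=C_{\rm prog}\eps^2\lambda(t)$. The main obstacle is choosing the AM--GM weights so that the leftover term is precisely $\frac14$ times the negative $\proj b$ term supplied by \cref{lem:single-correction}; this is why $G$ enters $\mathcal P$ with the factor $\frac14$. A secondary technical point is justifying \cref{eq:prelim-commutator-evolution} for the time-dependent, infinite-dimensional but bounded generators. We handle it by differentiating the integral equations almost everywhere, as in \cref{subsec:Hamiltonians}.
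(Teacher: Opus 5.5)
Your proof is correct and follows the paper's argument: the same decomposition $\id[\mathbf J_\eps,I_{\cA}\otimes G]=\sum_x P_x\otimes\id[J_\eps,g]_x$ combined with \cref{lem:single-correction}, the same Cauchy--Schwarz bound on the overlap derivative via $\bra{\Omega}J_\eps^{(x)}=\eps\bra{b_x}$, and the same constant $C_{\rm prog}=1+\eta^2/4$. The only difference is cosmetic: you apply AM--GM to each summand, whereas the paper first uses Cauchy--Schwarz over $x$ to get $\eps\lambda(t)\sqrt{\beta(t)}$ and then completes the square $\eps\sqrt{\beta}-\beta/4\le\eps^2$, so your mention of Cauchy--Schwarz over $x$ is superfluous; you should also state explicitly that $\lambda(t)\ge0$ is what lets you multiply the operator inequality by $\lambda(t)$, the point the paper's remark on forward evolution emphasizes.
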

\begin{proof}
    We recall that the progress function is defined as 
    \begin{equation*}
        {\cal P}(t) = 1-\operatorname{Re}\braket{\Psi_{Z}(t)}{\Psi_F(t)} + \frac{1}{4}\bra{\Psi_F(t)}(I_{\cal A} \otimes G)\ket{\Psi_F(t)}.
    \end{equation*}
    We first upper bound the derivative of the \emph{correction} term determined by $G$, and later the derivative of the \emph{infidelity} term $1-\operatorname{Re}\braket{\Psi_{Z}(t)}{\Psi_F(t)}$. By \cref{lem:single-correction}, the definition of $\mathbf J_{\eps}(t)$, and $\sum_xP_x(t)\preceq I_{\mathcal A}$,
    \begin{align}
        \id[\mathbf J_{\eps}(t),I_{\mathcal A}\otimes G]
        &=\sum_xP_x(t)\otimes\id[J_{\eps}^{(x)},g_x]\nonumber\\
        &\preceq
        \eta^2\eps^2 I_{\mathcal A\otimes\mathcal I}
        -\sum_xP_x(t)\otimes\proj{b}_x,
        \label{eq:G-commutator}
    \end{align}
    where $\proj{b}_x$ acts as $\proj{b}$ on $\mathcal I_x$ and as the identity on the other input registers. Write
    \[
        \beta(t):=
        \bra{\Psi_F(t)}
        \left(\sum_xP_x(t)\otimes\proj{b}_x\right)
        \ket{\Psi_F(t)}\geq0.
    \]
    By the definition of $\ket{\Psi_F(t)}$ in \cref{eq:purifiedevol}, we have that 
    \begin{align*}
        \frac{\dd}{\dd t}
        \bra{\Psi_F(t)}I_{\mathcal A}\otimes G\ket{\Psi_F(t)}&=\bra{\Psi_F(t)}i[K(t)\otimes I_{\mathcal I}+\lambda(t)\mathbf J_\eps(t),I_{\mathcal A}\otimes G]\ket{\Psi_F(t)}\\
        &=\lambda(t)\bra{\Psi_F(t)}i[\mathbf J_\eps(t),I_{\mathcal A}\otimes G]\ket{\Psi_F(t)}.
    \end{align*}
    Thus, by \cref{eq:G-commutator} and the fact that $\lambda(t)\geq 0,$ we have that 
    \begin{equation}\label{eq:correction-derivative}
        \frac{\dd}{\dd t}
        \bra{\Psi_F(t)}I_{\mathcal A}\otimes G\ket{\Psi_F(t)}
        \leq \lambda(t)\left(\eta^2\eps^2-\beta(t)\right).
    \end{equation}
    To control the derivative of the infidelity, we note that by \cref{eq:purifiedevol}, we have that 
    \[
        \frac{\dd}{\dd t}\braket{\Psi_{Z}(t)}{\Psi_F(t)}
        =-\id\lambda(t)
        \bra{\Psi_{Z}(t)}\mathbf J_{\eps}(t)\ket{\Psi_F(t)}.
    \]
    By \cref{lem:purification}, $\ket{\Psi_{Z}(t)}=\ket{\psi_0(t)}\otimes\ket{\Omega}$. Define
    $\ket{b_x}:=\ket{b}_x\otimes\bigotimes_{y\neq x}\ket{\perp}_y\in\mathcal I$,
    so $J_{\eps}^{(x)}\ket{\Omega}=\eps\ket{b_x}$. Triangle inequality and Cauchy--Schwarz then give
    \begin{align*}
        \abs{\bra{\Psi_{Z}(t)}\mathbf J_{\eps}(t)\ket{\Psi_F(t)}}&\leq \sum_x \abs{(\langle \psi_0(t)|\otimes \bra{\Omega})(P_x(t)\otimes J_\eps^{(x)})(P_x(t)\otimes I)|\Psi_F(t)\rangle }\\
        &=\eps \sum_x \abs{(\langle \psi_0(t)| P_x(t))  (P_x(t)\otimes \bra{b_x})|\Psi_F(t)\rangle }\\
        &\leq\eps\sum_x
        \norm{P_x(t)\ket{\psi_0(t)}}_{\mathcal A}
        \norm{(P_x(t)\otimes\bra{b_x})\ket{\Psi_F(t)}}_{\mathcal A}.
    \end{align*}
    Now, we use that $\proj{b_x}\preceq\proj{b}_x$, the orthogonality of the $P_x(t)$ to arrive and the fact that $\lambda(t)\geq 0,$ to arrive at 
    \begin{equation}\label{eq:upperboundoverlap}
                 \abs{\frac{\dd}{\dd t}\braket{\Psi_{Z}(t)}{\Psi_F(t)}}\leq\eps\lambda(t)
        \left(\sum_x\norm{P_x(t)\ket{\psi_0(t)}}_{\mathcal A}^2\right)^{1/2}
        \sqrt{\beta(t)}
        \leq\eps\lambda(t)\sqrt{\beta(t)}.
    \end{equation}
    Combining this bound with \cref{eq:correction-derivative} and the definition in \cref{eq:progress}, we obtain
    \begin{align*}
        \frac{\dd}{\dd t}\mathcal P(t)
        &\leq\lambda(t)\left(
        \eps\sqrt{\beta(t)}
        +\frac{\eta^2\eps^2}{4}
        -\frac{\beta(t)}{4}\right)\\
        &=\lambda(t)\left(
        \left(1+\frac{\eta^2}{4}\right)\eps^2
        -\left(\frac{\sqrt{\beta(t)}}{2}-\eps\right)^2\right)
        \leq C_{\rm prog}\eps^2\lambda(t),
    \end{align*}
    which completes the proof.
\end{proof}

\subsection{Putting everything together}
The lower bound now follows by comparing the final progress required by \cref{lem:successful-progress} with the rate bound in \cref{lem:progress-rate}.

\begin{theorem}\label{thm:mainresult}
    Let $0<\eps<1/2$. If an algorithm satisfies $\norm{\ket{\Psi_F(t_f)}-\ket{\Psi_{Z}(t_f)}}_{\cal AI}=\Omega(1)$, then the algorithm's total evolution time $T$ satisfies
    \[
        T=\Omega\left(\frac{1}{\eps^2}\right).
    \]
\end{theorem}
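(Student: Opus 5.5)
The proof combines the two lemmas just established: \cref{lem:successful-progress} gives a lower bound on $\mathcal P(t_f)$, and \cref{lem:progress-rate} gives an upper bound on its growth.

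First, suppose $\norm{\ket{\Psi_F(t_f)}-\ket{\Psi_{Z}(t_f)}}_{\cal AI}\geq c$ for some constant $c>0$. Then \cref{lem:successful-progress} gives $\mathcal P(t_f)\geq c^2/2$. Its proof shows this is exactly the infidelity term, and the correction term is nonnegative because $G\succeq0$.

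Second, I need $\mathcal P(0)=0$. At time zero both purified states equal $\ket{0}_{\cA}\otimes\ket{\Omega}$, so the infidelity term vanishes. The correction term also vanishes, since $G\ket{\Omega}=\sum_x g_x\ket{\Omega}=0$, which holds because $g\ket{\perp}=0$ by \cref{eq:g-definition}.

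Third, I integrate the rate bound. By \cref{lem:progress-rate}, $\mathcal P$ is absolutely continuous on $[0,t_f]$ and $\dot{\mathcal P}(t)\leq C_{\rm prog}\eps^2\lambda(t)$ almost everywhere. The fundamental theorem of calculus for absolutely continuous functions then gives
\[
\mathcal P(t_f)=\mathcal P(0)+\int_0^{t_f}\dot{\mathcal P}(t)\dd t\leq C_{\rm prog}\eps^2\int_0^{t_f}\lambda(t)\dd t=C_{\rm prog}\eps^2T.
\]
Combining the two bounds yields $T\geq c^2/(2C_{\rm prog}\eps^2)$.

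Finally, I check that the constant is independent of $\eps$ and $n$. We have $C_{\rm prog}=1+\eta^2/4$ with $\eta^2=1+\norm{R}_{\op}^2\leq 1+C^2$. Here $C$ comes from \cref{lem:propertieshardextra} and depends only on the fixed bump function $b$ and on $M$, neither of which depends on $\eps$ or $n$. Hence $T=\Omega(1/\eps^2)$.

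The main technical point is the absolute continuity of $\mathcal P$, which justifies integrating an almost-everywhere derivative bound. \cref{lem:progress-rate} already supplies this: $\ket{\Psi_F}$ and $\ket{\Psi_Z}$ are absolutely continuous by the Dyson-expansion discussion, and $G$ is bounded. So no real obstacle remains.
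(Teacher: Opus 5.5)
Your proposal is correct and follows the paper's proof: it compares the lower bound $\mathcal P(t_f)=\Omega(1)$ from \cref{lem:successful-progress} with the integrated rate bound $\mathcal P(t_f)-\mathcal P(0)\leq C_{\rm prog}\eps^2T$ from \cref{lem:progress-rate}. Your explicit checks that $\mathcal P(0)=0$ and that $C_{\rm prog}\leq 1+(1+C^2)/4$ is independent of $\eps$ and $n$ are correct, and they fill in details the paper leaves implicit.
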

\begin{proof}
    By \cref{lem:successful-progress}, we have that such an algorithm satisfies
    \begin{equation*}
        \mathcal P(t_f)-\mathcal P(0)\geq\Omega(1).
    \end{equation*}
    On the other hand, by \cref{lem:progress-rate}, we have that 
    \begin{equation*}
        \mathcal P(t_f)-\mathcal P(0)= \int_0^{t_f} \dot{\mathcal P}(s)\dd s\leq C_{\text{prog}}\eps^2\int_0^{t_f} \lambda(s)\dd s= C_{\text{prog}}\eps^2T.
    \end{equation*}
    Putting both things together and rearranging yields the claimed result.
\end{proof}

\begin{remark}[Where forward evolution is used]
    The restriction $\lambda(t)\geq0$ is used to obtain \cref{eq:correction-derivative,eq:upperboundoverlap}. For negative $\lambda(t)$, the operator inequalities that lead to \cref{eq:correction-derivative} would reverse direction, and thus would no longer imply the upper bound on the derivative of the progress function ${\cal P}(t)$. 
\end{remark}

\section{Applications}\label{sec:corollaries}
In this section, we show how our main technical result, \cref{thm:mainresult}, implies hardness of Hamiltonian certification, locality testing and amplitude estimation.

\subsection{Locality testing does not achieve the Heisenberg limit}
Our first application is that Hamiltonian locality testing does not achieve the Heisenberg limit. For our reduction to \cref{thm:mainresult}, we need to show that the random diagonal Hamiltonian is far from local with high probability.

\begin{lemma}[The random Hamiltonian is far from local]\label{cor:randomHisFarFromLocal}
    Let $0<\alpha<1/2$, $n\in \N$, and $k\in \N_0$. Assume that $k\leq n/8$ and $\alpha^2\geq 24/2^n$. Let $H=\sum_{x\in \{0,1\}^n}h_x\ket{x}\bra{x}$ be a random Hamiltonian, where the $h_x$ are drawn independently from $\nu_\alpha$ as in \cref{def:harddist}. Then, 
    $$\Pr\left[\mathrm{dist}(H,k\textrm{-}\mathrm{local}) > \dfrac{\alpha}{2}\right]\geq \dfrac{2}{3}.$$ 
\end{lemma}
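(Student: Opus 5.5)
We use the diagonal formula
\[
\dist(H,k\text{-local})^2=\E_x h_x^2-\sum_{|S|\le k}\widehat h(S)^2 .
\]
The plan is to show, via Chebyshev's inequality (\cref{lem:chebyshev}), that the first term is at least roughly $\alpha^2\cdot\tfrac{3}{4}$, and, via Markov's inequality (\cref{lem:markov}), that the subtracted low-degree mass is small. Each of these bad events will be given failure probability at most $1/6$.

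\textbf{Step 1: the norm term.} Let $X:=\E_x h_x^2=2^{-n}\sum_x h_x^2$. By \cref{lem:propertiesharddist}, $\E X=\alpha^2$. Since the $h_x$ are independent and $|h_x|\le 1$,
\[
\var(X)=2^{-2n}\sum_x\var(h_x^2)\le 2^{-n}\E[h^4]\le 2^{-n}\E[h^2]=\alpha^2 2^{-n}.
\]
Chebyshev then gives
\[
\Pr\bigl[X\le \alpha^2/2\bigr]\le \frac{\alpha^2 2^{-n}}{\alpha^4/4}=\frac{4}{\alpha^2 2^n}\le \frac{1}{6},
\]
where the last step uses $\alpha^2\ge 24/2^n$.

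\textbf{Step 2: the low-degree mass.} For each $S$ we have $\widehat h(S)=2^{-n}\sum_x h_x\chi_S(x)$. Because the $h_x$ are independent with mean zero, the cross terms vanish and
\[
\E\bigl[\widehat h(S)^2\bigr]=2^{-2n}\sum_x \E h_x^2=\alpha^2 2^{-n}.
\]
Let $Y:=\sum_{|S|\le k}\widehat h(S)^2$ and $N_k:=\sum_{j\le k}\binom{n}{j}$, so that $\E Y=\alpha^2 N_k 2^{-n}$. For $k\le n/8$ the entropy bound gives $N_k\le 2^{nH(1/8)}$ with $H(1/8)\approx 0.544$. Hence $N_k2^{-n}\le 2^{-0.45n}$, which we need to be small. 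Markov's inequality yields
\[
\Pr\bigl[Y\ge \alpha^2/4\bigr]\le 4N_k2^{-n}.
\]
The condition $\alpha^2\ge 24/2^n$ together with $\alpha<1/2$ forces $2^n>96$, i.e.\ $n\ge 7$. For such $n$ and $k\le n/8$ we get $k=0$, and then $N_k=1$ and $4/2^n\le 1/24$. For larger $n$ we check that $4N_k2^{-n}\le 1/6$ using the entropy bound, verifying directly any small $n$ with $k\ge1$ (which requires $n\ge 8$: e.g.\ $n=8$, $k=1$ gives $4\cdot 9/256<1/6$). The claim $4N_k 2^{-n}\le 1/6$ uniformly over the allowed range is the step we expect to need the most care, since it is a purely combinatorial estimate that must hold for all small $n$ with explicit constants.

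\textbf{Step 3: combine.} By a union bound, with probability at least $1-1/6-1/6=2/3$ both good events hold, and then
\[
\dist(H,k\text{-local})^2\ge \frac{\alpha^2}{2}-\frac{\alpha^2}{4}=\frac{\alpha^2}{4}.
\]
This only gives $\dist\ge \alpha/2$, not the strict inequality. To obtain strictness we tighten the thresholds slightly: use the event $X>\alpha^2/2$, whose complement has probability at most $1/6$ by the same Chebyshev bound, and $Y<\alpha^2/4$. Then $\dist^2>\alpha^2/4$, so $\dist(H,k\text{-local})>\alpha/2$.
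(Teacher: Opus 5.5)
Your proof is correct and follows the paper's argument: Parseval writes $\dist(H,k\text{-local})^2$ as the norm term minus the low-degree Fourier mass, Chebyshev (with $\var\le\alpha^2 2^{-n}$) handles the first, Markov handles the second, and a union bound over the events $X>\alpha^2/2$ and $Y<\alpha^2/4$ finishes. You are more careful than the paper on the combinatorial bound $\sum_{j\le k}\binom{n}{j}\le 2^n/24$, which the paper only attributes to $k\le n/8$; your observation that the hypotheses force $n\ge 7$ closes it, together with the entropy bound for $n\ge 11$ and direct checks for $n\in\{8,9,10\}$, where $k\le 1$.
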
 
\begin{proof}
    As $H$ is diagonal in the computational basis, 
    \begin{equation*}
        (\mathrm{dist}(H,k\textrm{-local}))^2=\sum_{S\subseteq [n]:\, |S|>k}|\widehat h(S)|^2,
    \end{equation*}
    where $\widehat h(S)=\mathbb E_x[h_x(-1)^{\sum_{i\in S}x_i}]$ are the Fourier coefficients of $h:\{0,1\}^n\to \R$. We now note that due to Parseval's identity
    \begin{equation*}
        (\mathrm{dist}(H,k\textrm{-local}))^2=\sum_{S\subseteq [n]}|\widehat h(S)|^2-\sum_{S\subseteq [n]:\, |S|\leq k}|\widehat h(S)|^2=\underbrace{\sum_{x\in\{0,1\}^n}\frac{|h_x|^2}{2^n}}_{(*)}-\underbrace{\sum_{S\subseteq [n]:\, |S|\leq k}|\widehat h(S)|^2}_{(**)}.
    \end{equation*}
    We continue by showing that $(*)$ is large and $(**)$ is small with high probability, beginning with the former. Indeed, by linearity of expectation and \cref{lem:propertiesharddist}, we have that $\mathbb E_H[(*)]=\alpha^2$. Now, by independence of the $h_x$ and the fact that $h_x\in [-1,1]$, we have that $\var_H[(*)]=\sum_{x}\var_H[h_x^2]/2^{2n}\leq \sum_{x}\mathbb E_H[h_x^2]/2^{2n}\leq \alpha^2/2^n$. Now, by Chebyshev's inequality and the assumption $\alpha^2\geq 24/ 2^n$, we have that 
    \begin{equation}\label{eq:concentration1}
        \Pr\left[(*)\leq \dfrac{\alpha^2}{2}\right]\leq \frac{\alpha^2/2^n}{\alpha^4/4}=\frac{4}{2^n\alpha^2}\leq \frac{1}{6}.    
    \end{equation}
    Next, we show that $(**)$ is small with high probability. Indeed, in expectation $(**)$ behaves as
    \begin{align*}
        \mathbb E_H[(**)]&=\sum_{S\subseteq [n]:\, |S|\leq k}\mathbb E_{H}[|\widehat h(S)|^2]=\sum_{S\subseteq [n]:\, |S|\leq k}\mathbb E_{H}[\mathbb E_{x,y}[h_xh_y(-1)^{\sum_{i\in S}x_i+y_i}]]\\
        &=\sum_{S\subseteq [n]:\, |S|\leq k}\alpha^2\mathbb E_{x,y}[(-1)^{\sum_{i\in S}x_i+y_i}\delta_{x,y}]=\frac{\alpha^2}{2^n}\sum_{S\subseteq [n]:\, |S|\leq k} 1\\
        &\leq \dfrac{\alpha^2}{24},
    \end{align*}
    where in the second line we have used that $\mathbb E[h_x^2]=\alpha^2$, $\mathbb E[h_x]=0$ (see \cref{lem:propertiesharddist}), and that $h_x$ and $h_y$ are independent for $x\neq y;$ and in the third line that $k\leq n/8$. Now, by Markov's inequality we have that 
    \begin{equation}\label{eq:concetration2}
        \Pr\left[(**)\geq \dfrac{\alpha^2}{4}\right]\leq \dfrac{1}{6}.
    \end{equation}
    Combining \cref{eq:concentration1,eq:concetration2} one has that 
    \begin{align*}
        \Pr\left[\mathrm{dist}(H,k\textrm{-local}) >  \dfrac{\alpha}{2}\right]&=\Pr\left[(\mathrm{dist}(H,k\textrm{-local}))^2 > \dfrac{\alpha^2}{4}\right]\\&\geq 1-\Pr\left[(*)\leq \dfrac{\alpha^2}{2}\right]-\Pr\left[(**)\geq \dfrac{\alpha^2}{4}\right]\geq \dfrac{2}{3},
    \end{align*}
    as desired.
\end{proof}

Next, we  show that if an algorithm is capable of performing locality testing, then the corresponding purified final states are far away from each other at the final time.

\begin{corollary}\label{cor:hardnesslocality}
    Let $0 < \eps < 1/4,$ $n\in \N$ and $k\in \N$. Assume that $k\leq n/8$ and $n \geq 7$. Testing $k$-locality with distance threshold $\eps$ and success probability at least $9/10$ requires total forward evolution time at least 
    \[
        T=\Omega\left(\min\left\{\frac{1}{\eps^2},\frac{2^{n/2}}{\eps}\right\} \right).
    \]
\end{corollary}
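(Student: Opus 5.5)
The plan is to reduce locality testing to the task of \cref{thm:mainresult}: distinguishing the zero Hamiltonian from the random diagonal ensemble with entries drawn from $\nu_\alpha$. I would first handle the regime $\eps^2\geq 6/2^n$ by a direct reduction, and then treat the high-precision regime $\eps^2<6/2^n$ by a rescaling argument.

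\emph{Direct reduction for $\eps^2\geq 6/2^n$.} Set $\alpha:=2\eps$. Then $\alpha<1/2$ because $\eps<1/4$, and $\alpha^2=4\eps^2\geq 24/2^n$. So \cref{cor:randomHisFarFromLocal} applies: with probability at least $2/3$, the random $H$ satisfies $\dist(H,k\text{-local})>\alpha/2=\eps$. It also satisfies $\norm{H}_{\op}\leq 1$, since $\nu_\alpha$ is supported in $[-1,1]$. The zero Hamiltonian is $k$-local. Take a tester with success probability $9/10$, and simulate it in the continuous-time model of \cref{sec:conttimemodel} with error at most $1/100$ and no increase in total evolution time.
\begin{itemize}
\item On $H=0$, the tester outputs ``local'' with probability at least $0.89$.
\item On the random ensemble, it outputs ``far'' with probability at least $\tfrac23\cdot 0.89$, so it outputs ``local'' with probability at most about $0.41$.
\end{itemize}
The final measurement plus post-processing therefore separates the two averaged states by at least a constant in total variation. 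By \cref{lem:purification}, these states are $\Tr_{\mathcal I}\proj{\Psi_F(t_f)}$ and $\Tr_{\mathcal I}\proj{\Psi_Z(t_f)}$. Their trace distance is at most that of the purified pure states, since partial trace is contractive. For pure states, trace distance is at most $\norm{\ket{\Psi_F(t_f)}-\ket{\Psi_Z(t_f)}}$. Hence this norm is $\Omega(1)$. \cref{thm:mainresult} then gives $T=\Omega(1/\alpha^2)=\Omega(1/\eps^2)$. Here I use that the worst-case time bound $T$ also bounds $\int\lambda$ in the simulated algorithm.

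\emph{Rescaling for $\eps^2<6/2^n$.} Now fix $\alpha:=\sqrt{24/2^n}$, which is below $1/2$ because $n\geq 7$. Set $c:=2\eps/\alpha<1$. Given a random $H$ from the $\nu_\alpha$ ensemble, consider $H':=cH$. Three facts make this work.
\begin{itemize}
\item $\norm{H'}_{\op}\leq 1$.
\item $H'$ is $k$-local exactly when $H$ is.
\item Distance scales linearly, $\dist(cH,k\text{-local})=c\,\dist(H,k\text{-local})$, by the formula via $H_{>k}$. So with probability $\geq 2/3$ we get $\dist(H',k\text{-local})>c\alpha/2=\eps$.
\end{itemize}
A query $e^{-\id tH'}$ equals $e^{-\id (ct)H}$, so a tester for $H'$ running in time $T$ yields a distinguisher for $H$ versus $0$ running in time $cT$. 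In the continuous-time model this amounts to replacing $\lambda$ by $c\lambda$. Applying the argument of the previous paragraph gives $cT=\Omega(1/\alpha^2)$, that is,
\[
T=\Omega\!\left(\frac{1}{c\,\alpha^2}\right)=\Omega\!\left(\frac{1}{\eps\alpha}\right)=\Omega\!\left(\frac{2^{n/2}}{\eps}\right).
\]
Combining the two regimes yields $T=\Omega(\min\{1/\eps^2,2^{n/2}/\eps\})$. This is because $1/\eps^2$ and $2^{n/2}/\eps$ agree up to constants at the threshold $\eps\approx 2^{-n/2}$, and the smaller one is the relevant bound in each regime.

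\emph{Main obstacle.} The delicate step is converting the tester's success guarantee into the hypothesis $\norm{\ket{\Psi_F(t_f)}-\ket{\Psi_Z(t_f)}}=\Omega(1)$ of \cref{thm:mainresult}. This requires handling three things:
\begin{itemize}
\item the promise-free part of the ensemble, namely the probability-$1/3$ event where $H$ is not far, on which the tester may output anything;
\item the simulation error introduced when passing to the continuous-time model;
\item the fact that adaptivity makes $T$ a worst-case rather than average-case quantity.
\end{itemize}
I would handle all three by the explicit probability bookkeeping above, together with monotonicity of trace distance under partial trace.
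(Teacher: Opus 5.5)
Your proposal is correct and follows essentially the same argument as the paper. The paper takes $\alpha=\max\{2\eps,\sqrt{24/2^n}\}$ and rescales the $\nu_\alpha$ ensemble by $2\eps/\alpha$ in one step; your two regimes are exactly the two branches of this maximum. It then uses \cref{cor:randomHisFarFromLocal}, \cref{lem:purification} and \cref{thm:mainresult} just as you do. The differences are cosmetic. The paper turns the success gap of $1/2$ into $\norm{\ket{\Psi_F(t_f)}-\ket{\Psi_Z(t_f)}}\geq 1/4$ by a direct Cauchy--Schwarz estimate, rather than through contractivity of trace distance under partial trace. It also does not explicitly budget for the arbitrarily small continuous-time simulation error that you account for.
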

\begin{proof}
    Set $\alpha=\max\{2\eps,\sqrt{24/2^n}\}.$ Consider the random Hamiltonian $H$ ensemble of~\cref{sec:hardinst} for $\nu_\alpha$. Then, consider the random Hamiltonian $\widetilde H=(2\eps/\alpha)H$. By \cref{cor:randomHisFarFromLocal},$\widetilde H$ is $\eps$-far from local with probability at least 2/3. Also note that $2\eps/\alpha\leq 1$, so $\norm{\widetilde H}_{\op}\leq \norm{H}_{\op}\leq 1$. 
    
    Consider an algorithm that decides whether a Hamiltonian is $k$-local or $\eps$-far from it. When considering this algorithm on input $\widetilde H$, we can instead assume that the input is $H$, and then rescale the total evolution time by $(2\eps/\alpha).$ Let $T$ be the time evolution of the algorithm when evolving according to $H$, and let $\widetilde T$ the time evolution when evolving according to $\widetilde H$. Consequently, $\widetilde T=(\alpha/2\eps)T$.
    
    Let $\ket{\Psi_F(t)}$ and $\ket{\Psi_Z(t)}$ the purified time evolutions evolving according to $H$ and the zero Hamiltonian, respectively.  Let the final binary measurement of the algorithm be determined by $\{P_Z,P_F\}$. Since the zero Hamiltonian is $k$-local,  \cref{lem:purification} implies $$\bra{\Psi_Z(t_f)}P_Z\otimes I_{\mathcal I}\ket{\Psi_Z(t_f)}\geq 9/10.$$
    
    As $\dist(\widetilde H,k\text{-local})>\eps$ with probability at least $2/3$, and since for every such Hamiltonian, our algorithm succeeds with probability at least $9/10$, we have by \cref{lem:purification} that
    \[
    \bra{\Psi_F(t_f)}(P_F\otimes I_{\mathcal I})\ket{\Psi_F(t_f)} =\mathbb E_H\left[\bra{\psi_H(t_f)}P_F\ket{\psi_H(t_f)}\right] \geq \frac{2}{3}\cdot\frac{9}{10} =\frac{3}{5}.
    \]
    Then,
    \[
    \bra{\Psi_F(t_f)}(P_F\otimes I_{\mathcal I})\ket{\Psi_F(t_f)} -\bra{\Psi_{Z}(t_f)}(P_F\otimes I_{\mathcal I})\ket{\Psi_{Z}(t_f)} \geq \frac{3}{5}-\frac{1}{10}\geq \frac{1}{2}.
    \]
    
    We abbreviate $\ket{\Delta\Psi}:=\ket{\Psi_F(t_f)}-\ket{\Psi_{Z}(t_f)}$. Rearranging the difference above and applying Cauchy--Schwarz gives
    \begin{align*}
    \frac{1}{2} &\leq\abs{\bra{\Delta\Psi} (P_F\otimes I_{\mathcal I})\ket{\Psi_F(t_f)}+\bra{\Psi_{Z}(t_f)}(P_F\otimes I_{\mathcal I})\ket{\Delta\Psi}} \leq2\sqrt{\langle\Delta\Psi|\Delta\Psi\rangle}.
    \end{align*}
    Hence, $\norm{\Delta\Psi}_{\cal AI}\geq 1/4$, allowing us to apply \cref{thm:mainresult} yielding 
    \[
    T = \Omega(1/\alpha^2).
    \]
    As $\widetilde T=(\alpha/2\eps)T,$ it follows that 
    \begin{equation*}
        \widetilde T=\Omega\left(\frac{1}{\alpha\eps}\right)=\Omega\left(\min\left\{\frac{1}{\eps^2},\frac{2^{n/2}}{\eps}\right\}\right).
    \end{equation*}
\end{proof}

\subsection{Hamiltonian certification does not achieve the Heisenberg limit}
For our second application, we show that Hamiltonian certification does not achieve the Heisenberg limit. For our reduction to \cref{thm:mainresult}, we need to show that the random diagonal Hamiltonian is far from the zero Hamiltonian with high probability. To avoid that this ``farness'' is a result from a term proportional to the identity (which cannot be detected by the Hamiltonian time evolution), we impose the random Hamiltonian to be traceless. This does not affect the performance of any algorithm.

\begin{lemma}[The random traceless Hamiltonian is far from zero]\label{cor:randomHisFarFromZero}
    Let $0 < \alpha < 1/2$ and $n\in \N$. Assume that $\alpha^2\geq 24/2^n$. For $H=\sum_{x\in \{0,1\}^n}h_x\ket{x}\bra{x}$, where the $h_x$ are drawn independently from $\nu_\alpha$ as in \cref{def:harddist},
    \[
    \overline{h}=\frac{1}{2^n}\sum_{x\in\{0,1\}^n} h_x,\qquad \widetilde{H}=\frac{1}{2}(H-\overline{h} I).
    \]
    Then $\widetilde{H}$ is traceless, $\norm{\widetilde{H}}_{\op}\leq1$, and
    \[
    \Pr\left[\norm{\widetilde{H}}_2>\frac{\alpha}{4}\right]\geq\frac{2}{3}.
    \]
    \end{lemma}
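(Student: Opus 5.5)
Tracelessness and the operator-norm bound come first, since both are immediate. The trace of $H-\overline h I$ is $\sum_x h_x-2^n\overline h=0$. Because each $h_x\in[-1,1]$, we also have $|\overline h|\leq 1$, so $\norm{H-\overline h I}_{\op}=\max_x|h_x-\overline h|\leq 2$. Therefore $\norm{\widetilde H}_{\op}\leq 1$.

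For the farness statement, I would use the Pythagorean decomposition in the normalized Frobenius norm. The identity component of $H$ is $\widehat H(I)=\overline h$, so
\begin{equation*}
    4\norm{\widetilde H}_2^2=\norm{H}_2^2-\overline h^2=\underbrace{\frac{1}{2^n}\sum_x h_x^2}_{(*)}-\overline h^2 .
\end{equation*}
This is exactly the quantity studied in the proof of \cref{cor:randomHisFarFromLocal} in the case $k=0$, where $(**)=|\widehat h(\emptyset)|^2=\overline h^2$. So the event $\norm{\widetilde H}_2>\alpha/4$ is equivalent to $(*)-\overline h^2>\alpha^2/4$.

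I would then repeat the two concentration estimates from that proof.
\begin{itemize}
    \item \textbf{Lower bound on $(*)$.} By \cref{lem:propertiesharddist}, $\E[(*)]=\alpha^2$ and $\var[(*)]\leq \alpha^2/2^n$. Chebyshev together with $\alpha^2\geq 24/2^n$ gives $\Pr[(*)\leq\alpha^2/2]\leq 1/6$.
    \item \textbf{Upper bound on $\overline h^2$.} Independence and mean zero give $\E[\overline h^2]=\alpha^2/2^n\leq \alpha^2/24$. Markov then gives $\Pr[\overline h^2\geq\alpha^2/4]\leq 1/6$.
\end{itemize}
A union bound shows that, with probability at least $2/3$, we have $(*)-\overline h^2>\alpha^2/2-\alpha^2/4=\alpha^2/4$. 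This yields $\norm{\widetilde H}_2>\alpha/4$.

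There is no real obstacle. The only point needing care is that the condition $k\leq n/8$ in \cref{cor:randomHisFarFromLocal} was used only to bound the number of low-degree sets. Here only $S=\emptyset$ appears, so I would redo the short computation directly rather than invoke the lemma with $k=0$.
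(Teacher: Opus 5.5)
Your proof is correct and is essentially the paper's argument: the paper also notes that $4\norm{\widetilde H}_2^2=\frac{1}{2^n}\sum_x(h_x-\overline h)^2=\dist(H,0\text{-local})^2$ and then applies \cref{cor:randomHisFarFromLocal} with $k=0$, which is legitimate because $0\leq n/8$ holds trivially, so your caution about that hypothesis is unnecessary. You instead inline the same Chebyshev and Markov estimates for the single set $S=\emptyset$, which is the same computation.
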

    \begin{proof}

    By the definition of $\overline{h}$, 
    \[
    \Tr[\widetilde{H}]=\frac{1}{2}\left(\sum_{x\in\{0,1\}^n} h_x-2^n\overline{h}\right)=0,
    \]
    so $\widetilde{H}$ is traceless. Moreover, since both $\abs{h_x},\abs{\overline h}\leq 1$, the operator-norm bound is immediate. 
    
    Expanding the normalized Frobenius norm gives
    \begin{equation}\label{eq:expand-frobnom}
    \begin{split}
    \norm{\widetilde{H}}_2^2 &= \frac{1}{4 \cdot 2^n}\sum_{x\in\{0,1\}^n}(h_x - \overline{h})^2.
    \end{split}
    \end{equation}
    A $0$-local Hamiltonian is a scalar multiple of the identity, so $\dist(H,0\text{-local})^2$ is 
    \[
    \dist(H,0\text{-local})^2 = \min_{c} \norm{H-cI}_2^2 = \min_{c} \frac{1}{2^n}\sum_x(h_x-c)^2 =\frac{1}{2^n}\sum_{x\in\{0,1\}^n}(h_x-\overline h)^2+  \min_{c} \left\{(c-\overline h)^2\right\}.
    \]
    This minimum is attained at $c = \overline{h}$, meaning by \cref{eq:expand-frobnom} that
    \begin{equation}\label{eq:distIsFrobnorm}
    \begin{split}
    \dist(H,0\text{-local}) = 2\norm{\widetilde{H}}_2.
    \end{split}
    \end{equation}
    Hence, the lemma follows from \cref{cor:randomHisFarFromLocal} by taking $k=0$.
\end{proof}

\begin{corollary}\label{cor:hardnesscertification}
    Let $0<\eps<1/8$ and $n\in\N$. Assume that $n \geq 7$. Hamiltonian certification with distance threshold $\eps$ and success probability at least $9/10$ requires total forward evolution time at least
    \[
    T = \Omega\left(\min\left\{\frac{1}{\eps^2},\frac{2^{n/2}}{\eps}\right\} \right).
    \]
    \end{corollary}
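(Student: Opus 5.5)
The plan mirrors the proof of \cref{cor:hardnesslocality}, replacing \cref{cor:randomHisFarFromLocal} by \cref{cor:randomHisFarFromZero}. The first step is to set $\alpha:=\max\{4\eps,\sqrt{24/2^n}\}$; since $\eps<1/8$ and $n\geq 7$, this gives $\alpha<1/2$ and $\alpha^2\geq 24/2^n$. Let $H$ be the random diagonal ensemble with entries drawn from $\nu_\alpha$, and define the rescaled traceless Hamiltonian
\[
\widehat H:=\frac{4\eps}{\alpha}\widetilde H=\frac{2\eps}{\alpha}\bigl(H-\overline h I\bigr),
\]
where $\widetilde H$ is as in \cref{cor:randomHisFarFromZero}. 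That lemma shows $\widehat H$ is traceless, $\norm{\widehat H}_{\op}\leq 1$ (because $4\eps/\alpha\leq1$), and $\norm{\widehat H-0}_2>\eps$ with probability at least $2/3$. Take the target $H_0=0$, which is traceless with norm at most $1$. A certification algorithm must then accept on $H=0$ with probability at least $9/10$, and reject on $\widehat H$ with probability at least $9/10$ whenever $\widehat H$ is $\eps$-far.

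The second step removes the identity shift and the rescaling. Since $e^{-\id t\widehat H}=e^{\id t(2\eps/\alpha)\overline h}\,e^{-\id t(2\eps/\alpha)H}$, the shift contributes only a global phase on each query. Note that $\overline h$ is a fixed (random) scalar. In the continuous-time model the term $-\lambda\overline h\sum_xP_x(t)$ is not literally a global phase when $\sum_xP_x\prec I$. However, for the experiment model of \cref{fig:access-model}, each query is a full uncontrolled $U(t_i)$, so the phase is global and invisible. The algorithm therefore behaves identically on $\widehat H$ and on $(2\eps/\alpha)H$, and I will apply this observation at the level of experiments before passing to the continuous model. The factor $2\eps/\alpha\leq1$ can then be absorbed into the evolution time exactly as in \cref{cor:hardnesslocality}. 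Running the algorithm on $(2\eps/\alpha)H$ with time $\widetilde T$ is the same as running a modified algorithm on $H$ with time $T=(2\eps/\alpha)\widetilde T$.

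The third step is the distinguishing argument. Using \cref{lem:purification}, the success on zero gives $\bra{\Psi_Z}P_F\otimes I\ket{\Psi_Z}\leq 1/10$. Success on the far event gives $\bra{\Psi_F}P_F\otimes I\ket{\Psi_F}\geq \frac23\cdot\frac9{10}=\frac35$. The same Cauchy--Schwarz computation as in \cref{cor:hardnesslocality} then yields $\norm{\Psi_F(t_f)-\Psi_Z(t_f)}\geq 1/4$. Applying \cref{thm:mainresult} with parameter $\alpha$ gives $T=\Omega(1/\alpha^2)$, hence
\[
\widetilde T=\frac{\alpha}{2\eps}T=\Omega\!\left(\frac1{\alpha\eps}\right)=\Omega\!\left(\min\left\{\frac1{\eps^2},\frac{2^{n/2}}{\eps}\right\}\right).
\]

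I expect the main obstacle to be making the global-phase removal rigorous in the presence of adaptivity and branch-dependent query durations. The fix is to perform the phase cancellation in the experiment model, where it is exact query by query. Only after that do I invoke the simulation of \cref{subsec:accessmodel} to pass to the continuous-time model with $(2\eps/\alpha)H$, or equivalently the rescaled model with $H$.
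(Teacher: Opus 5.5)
Your proposal is correct and takes the same route as the paper. Both use $\alpha=\max\{4\eps,\sqrt{24/2^n}\}$, \cref{cor:randomHisFarFromZero} for farness, a rescaling of the evolution time, the Cauchy--Schwarz distinguishing bound from \cref{cor:hardnesslocality}, and \cref{thm:mainresult} with parameter $\alpha$. Where you are more careful than the paper is the trace removal, which the paper only asserts ``does not affect the performance of any algorithm.'' Your normalization $\frac{2\eps}{\alpha}(H-\overline h I)$ keeps the operator norm at most $1$, and your global-phase cancellation, done in the experiment model before passing to continuous time, makes that step rigorous.
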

\begin{proof}
    Set $\alpha=\max\{4\eps,\sqrt{24/2^n}\}.$ Consider the random Hamiltonian $H$ ensemble of~\cref{sec:hardinst} for~$\nu_\alpha$. Then, consider the random Hamiltonian $\widetilde H=(4\eps/\alpha)H$. By \cref{cor:randomHisFarFromZero}, the traceless version of $\widetilde H$ is $\eps$-far from the zero Hamiltonian with probability at least 2/3. Also note that $4\eps/\alpha\leq 1$, so $\norm{\widetilde H}_{\op}\leq \norm{H}_{\op}\leq 1$. 
    
    Consider an algorithm that decides whether a Hamiltonian is the zero Hamiltonian or $\eps$-far from it. When considering this algorithm on input $\widetilde H$, we can instead assume that the input is $H$, and then rescale the total evolution time by $(4\eps/\alpha).$ Let $T$ be the time evolution of the algorithm when evolving according to $H$, and let $\widetilde T$ the time evolution when evolving according to $\widetilde H$. Consequently, $\widetilde T=(\alpha/4\eps)T$.
    
    Let the final binary measurement of the protocol be determined by $\{P_Z,P_F\}$, and let $\ket{\Psi_Z(t_f)}$ and $\ket{\Psi_F(t_f)}$ be its purified final states when evolving according to the zero Hamiltonian and the random ensemble, respectively. As in the proof of \cref{cor:hardnesslocality}, we have
    \begin{align*}
        \bra{\Psi_Z(t_f)}(P_Z\otimes I_{\mathcal I})\ket{\Psi_Z(t_f)} &\geq\frac{9}{10}, \\
        \bra{\Psi_F(t_f)}(P_F\otimes I_{\mathcal I})\ket{\Psi_F(t_f)} &\geq  \frac{3}{5},
    \end{align*}
    and 
    \[
    \bra{\Psi_F(t_f)}(P_F\otimes I_{\mathcal I})\ket{\Psi_F(t_f)} -\bra{\Psi_Z(t_f)}(P_F\otimes I_{\mathcal I})\ket{\Psi_Z(t_f)} \geq\frac{3}{5}-\frac{1}{10}=\frac{1}{2}.
    \]
    Applying \cref{thm:mainresult} therefore yields 
    \[
    T = \Omega(1/\alpha^2).
    \]
    Finally, as $\widetilde T=(\alpha/4\eps)T$, it follows that 
    \[
    \widetilde T=\Omega\left(\frac{1}{\alpha\eps}\right)=\Omega\left(\min\left\{\frac{1}{\eps^2},\frac{2^{n/2}}{\eps}\right\}\right).
    \]
\end{proof}

\subsection{Amplitude estimation requires inverses}

For our final application, we show that the properties of \cref{lem:propertiesharddist} imply that the family of random Hamiltonians can be distinguished from the zero Hamiltonian via amplitude estimation. For this, we define
\begin{equation*}
    \ket{\psi_{\text{ae}}}=\frac{1}{2^{n/2}}\sum_{x\in\{0,1\}^n}\ket{x}.
\end{equation*}
We apply $U_H(1)=e^{-\id H}$ to this state and estimate the amplitude of $U_H(1)\ket{\psi_{\mathrm{ae}}}$ in the subspace orthogonal to $\ket{\psi_{\mathrm{ae}}}$. For $H=0$, we have $U_H(1)=I$, so this amplitude is zero. For our random family, we will show that it is at least $\Omega(\eps)$ with constant probability. To express (the absolute value of) this amplitude, we define
\[
\Pi_{\mathrm{ae}}:=I-\proj{\psi_{\mathrm{ae}}}, \qquad a(H):=\sqrt{\bra{\psi_{\mathrm{ae}}}U_H(1)^\dagger \Pi_{\mathrm{ae}}U_H(1)\ket{\psi_{\mathrm{ae}}}}.
\]
For $a(H)>0$, the normalized component orthogonal to $\ket{\psi_{\mathrm{ae}}}$ is
\[
\ket{\psi_H}:=\frac{\Pi_{\mathrm{ae}}U_H(1)\ket{\psi_{\mathrm{ae}}}}{a(H)}.
\]
Thus,
\[
U_H(1)\ket{\psi_{\mathrm{ae}}} =a(H)\ket{\psi_H} +\bra{\psi_{\mathrm{ae}}}U_H(1)\ket{\psi_{\mathrm{ae}}} \ket{\psi_{\mathrm{ae}}}.
\]

\begin{corollary}\label{cor:randomHFarFromZeroViaAE}
    Let $0 < \eps < 1/16$ and $n\in \N$. Assume that $n \geq 7$. Given a known state $\ket{\psi}$ and projector $\Pi$, and continuous-time forward access to an unknown Hamiltonian $H$ with $\norm{H}_{\op}\leq1$, estimating
    $\sqrt{\bra{\psi}U_H(1)^\dagger\Pi U_H(1)\ket{\psi}}$ up to additive error $\eps$ with success probability at least $9/10$ requires, in the worst case, total evolution time
    \[
    T=\Omega\left(\min\left\{\frac{1}{\eps^2},\frac{2^{n/2}}{\eps}\right\} \right).
    \]
\end{corollary}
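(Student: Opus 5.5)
We reduce to \cref{thm:mainresult} in the same way as in \cref{cor:hardnesslocality,cor:hardnesscertification}, instantiating the estimation task with $\ket{\psi}=\ket{\psi_{\mathrm{ae}}}$ and $\Pi=\Pi_{\mathrm{ae}}$. Set $\alpha=\max\{c\eps,\sqrt{24/2^n}\}$ for a suitable constant $c$ (say $c=8$; the hypothesis $\eps<1/16$ keeps $\alpha<1/2$). Draw $H$ from the ensemble of \cref{sec:hardinst} with $\nu_\alpha$, and put $\widetilde H=(c\eps/\alpha)H$, so that $\norm{\widetilde H}_{\op}\leq 1$. For $H=0$ we have $a(0)=0$. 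The key step is to show that $a(\widetilde H)\geq 3\eps$ with probability at least $2/3$. Then an estimator with additive error $\eps$ distinguishes the two cases by thresholding at $3\eps/2$.

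To lower bound the amplitude, write $\widetilde h_x=(c\eps/\alpha)h_x$. Then
\[
a(\widetilde H)^2=1-\abs{\E_x e^{-\id\widetilde h_x}}^2=\E_{x,y}\bigl[1-\cos(\widetilde h_x-\widetilde h_y)\bigr],
\]
which is exactly the empirical variance of the unit-circle points $e^{-\id\widetilde h_x}$. Since $\abs{\widetilde h_x}\leq 1$, the inequality $1-\cos\theta\geq \theta^2/2-\theta^4/24\geq \theta^2(1/2-1/6)$ for $\abs{\theta}\leq 2$ gives
\[
a(\widetilde H)^2\geq \tfrac13\,\E_{x,y}(\widetilde h_x-\widetilde h_y)^2=\tfrac23\cdot\frac{1}{2^n}\sum_x(\widetilde h_x-\overline{\widetilde h})^2.
\]
By \cref{cor:randomHisFarFromZero}, which is the $k=0$ case of \cref{cor:randomHisFarFromLocal}, the quantity $\frac{1}{2^n}\sum_x(h_x-\overline h)^2$ exceeds $\alpha^2/4$ with probability at least $2/3$. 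Rescaling, $a(\widetilde H)^2\geq \frac23\cdot\frac{c^2\eps^2}{4}$. With $c=8$ this gives $a(\widetilde H)\geq \sqrt{32/3}\,\eps>3\eps$.

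Next, the success probability $9/10$ on each instance, together with \cref{lem:purification}, gives a binary measurement $\{P_Z,P_F\}$ with acceptance gap at least $3/5-1/10=1/2$. The same Cauchy--Schwarz step as in \cref{cor:hardnesslocality} then yields $\norm{\Psi_F(t_f)-\Psi_Z(t_f)}\geq 1/4$. The algorithm is an arbitrary continuous-time forward algorithm, so \cref{thm:mainresult} applies directly. Running it on $\widetilde H$ for time $\widetilde T$ is equivalent to running it on $H$ for time $T=(c\eps/\alpha)\widetilde T$. Hence $T=\Omega(1/\alpha^2)$, which gives
\[
\widetilde T=\Omega\bigl(1/(\alpha\eps)\bigr)=\Omega\bigl(\min\{1/\eps^2,2^{n/2}/\eps\}\bigr).
\]

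The main obstacle is the amplitude lower bound. It needs a uniform quadratic lower bound on $1-\cos$ over the range $\abs{\theta}\leq 2$, and it needs the constant $c$ chosen so that the required margin survives both the factor $1/3$ and the concentration step. The rest is bookkeeping carried over from the earlier corollaries. We should also check that $n\geq 7$ guarantees $\sqrt{24/2^n}<1/2$, so that $\nu_\alpha$ is defined: for $n=7$, $24/128<1/4$.
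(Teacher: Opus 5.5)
Your proposal is correct, and its skeleton is the paper's: the $\nu_\alpha$ ensemble with $\ket{\psi_{\mathrm{ae}}}$ and $\Pi_{\mathrm{ae}}$, a quadratic lower bound on $1-\cos\theta$ for $\abs{\theta}\leq 2$, the $k=0$ case of \cref{cor:randomHisFarFromLocal}, thresholding at $3\eps/2$, and a time rescaling by $c\eps/\alpha$ before invoking \cref{thm:mainresult}.

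The real difference is the order of rescaling and amplitude bound. The paper lower-bounds $a(H)$ for the unscaled instance. It then runs the estimator on $U_{\widetilde H}(\alpha/(6\sqrt2\eps))=U_H(1)$, a time-stretched unitary whose amplitude is still $a(H)$. You instead lower-bound $a(\widetilde H)$, the amplitude of $U_{\widetilde H}(1)$. This works because $1-\cos$ is quadratic on $[-2,2]$, so the amplitude shrinks only linearly under the rescaling.

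Your version matches the statement's cost model literally: the estimator is run on input $\widetilde H$, must output $a(\widetilde H)$, and is charged evolution time under $\widetilde H$. This matters exactly in the regime $\alpha>c\eps$, which is the one that produces the $2^{n/2}/\eps$ term.

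Your sharper inequality $1-\cos\theta\geq\theta^2/3$ and your choice $\alpha=\max\{8\eps,\sqrt{24/2^n}\}$ also keep $\alpha<1/2$ on the whole range $\eps<1/16$, $n\geq 7$. The paper's constants $6\sqrt2\eps$ and $24/2^{n/2}$ can exceed $1/2$ there; for instance, $24/2^{n/2}>1/2$ for $n\leq 11$.
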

\begin{proof}
    Set $\alpha=\max\{6\sqrt{2}\eps,24/2^{n/2}\}.$    We take $\ket{\psi} = \ket{\psi_{\mathrm{ae}}},\, \Pi = \Pi_{\mathrm{ae}}$. Let $H=\sum_{x\in \{0,1\}^n}h_x\ket{x}\bra{x}$, where the $h_x$ are drawn independently from $\nu_\alpha$ as in \cref{def:harddist}. This implies that $U=U_H(1)=\sum_{x}e^{-ih_x}\ket x\bra x$ is also diagonal and it follows that 
    \begin{align*}
        |\langle \psi_{\mathrm{ae}}|U|\psi_{\mathrm{ae}}\rangle |^2&=\frac{1}{4^n}\abs{\sum_{x,y\in\{0,1\}^n}\bra y U_H(1)\ket x}^2=\frac{1}{4^n}\abs{\sum_{x\in\{0,1\}^n}e^{-ih_x}}^2=\frac{1}{4^n}\sum_{x,y\in\{0,1\}^n}e^{i(h_x-h_y)}\\
        &=\frac{1}{4^n}\sum_{x,y\in\{0,1\}^n}\mathrm{cos}((h_x-h_y)),
    \end{align*}
    where in the second line we have used that the sine is antisymmetric. Hence, we have, using $\cos(x)\leq 1-x^2/4$ for $|x|\leq 2$, that
    \begin{equation}\label{eq:expandipampest}
        \bra{\psi_{\mathrm{ae}}}U_H^\dagger (1)\Pi_{\mathrm{ae}}U_H(1)\ket{\psi_{\mathrm{ae}}}=1-|\langle \psi_{\mathrm{ae}}|U|\psi_{\mathrm{ae}}\rangle |^2\geq \frac{1}{4 \cdot 4^n}\sum_{x,y\in\{0,1\}^n}(h_x-h_y)^2.
    \end{equation}
    Define $\overline{h}:= \frac{1}{2^n} \sum_x h_x$. Then by expanding the double sum in \cref{eq:expandipampest} we obtain
    \begin{align*}
    \frac{1}{4\cdot4^n}\sum_{x,y \in\{0,1\}^n}(h_x-h_y)^2 &=\frac{1}{2\cdot4^n} \left(2^n\sum_{x \in\{0,1\}^n}h_x^2-\left(\sum_{x \in\{0,1\}^n}h_x\right)^2\right)  =\frac{1}{2}\left(\frac{1}{2^n}\sum_{x \in\{0,1\}^n}h_x^2-\overline h^2\right)\\
    &=\frac{1}{2\cdot2^n}\sum_{x \in\{0,1\}^n}(h_x-\overline h)^2\\
    &=\frac{1}{2}\dist(H,0\text{-local})^2.
    \end{align*}
    where the last equality follows from \cref{eq:distIsFrobnorm}. Hence, applying \cref{cor:randomHisFarFromLocal} with $k=0$ gives
    \[
    \Pr_H\left[\sqrt{\bra{\psi_{\mathrm{ae}}}U_H(1)^\dagger \Pi_{\mathrm{ae}}U_H(1)\ket{\psi_{\mathrm{ae}}}}>\frac{\alpha}{2\sqrt{2}}\right] \geq\Pr_H\left[\dist(H,0\text{-local})>\frac{\alpha}{2}\right] \geq\frac{2}{3}.
    \]

    Now, we define $\widetilde H=(6\sqrt{2}\eps/\alpha)H$. Note that $\norm{\widetilde H}_{\op}\leq (6\sqrt{2}\eps/\alpha)\norm{H}_{\op}\leq 1\cdot 1=1$. Consider now an algorithm that queries $U_{\widetilde H}(\alpha/6\sqrt{2})$ and performs amplitude-estimation of $$\sqrt{\bra{\psi_{\mathrm{ae}}}U_{\widetilde H}(\alpha/6\sqrt{2})^\dagger \Pi_{\mathrm{ae}}U_{\widetilde H}(\alpha/6\sqrt{2})\ket{\psi_{\mathrm{ae}}}}=\sqrt{\bra{\psi_{\mathrm{ae}}}U_H(1)^\dagger \Pi_{\mathrm{ae}}U_H(1)\ket{\psi_{\mathrm{ae}}}}$$ up to error $\eps$. When considering the algorithm on input $U_{\widetilde H}(\alpha/6\sqrt{2}\eps)$, we can instead assume that the input is $U_H(1)$, and then rescale the total evolution time by $6\sqrt{2}\eps/\alpha.$ Let $T$ be the time evolution of the algorithm when evolving according to $H$, and let $\widetilde T$ the time evolution when evolving according to $\widetilde H$. Consequently, $\widetilde T=(\alpha/6\sqrt{2}\eps)T$.

    We now turn this algorithm into an algorithm that distinguishes the zero Hamiltonian from our random family. To do that, let $\widetilde{a}$ be the algorithm's estimate of $\sqrt{\bra{\psi_{\mathrm{ae}}}U_H(1)^\dagger \Pi_{\mathrm{ae}}U_H(1)\ket{\psi_{\mathrm{ae}}}}$, meaning
    \[
    \Pr\left[\abs{\widetilde{a}-\sqrt{\bra{\psi_{\mathrm{ae}}}U_H(1)^\dagger \Pi_{\mathrm{ae}}U_H(1)\ket{\psi_{\mathrm{ae}}}}}\leq \eps\right]\geq\frac{9}{10}
    \]
    for every $H$. Then, by outputting zero if $\widetilde{a}\leq 3\eps/2$, and far otherwise, we claim that the algorithm serves for distinguishing the zero Hamiltonian from the random ensemble. For $H=0$, we have $\sqrt{\bra{\psi_{\mathrm{ae}}}U_H(1)^\dagger \Pi_{\mathrm{ae}}U_H(1)\ket{\psi_{\mathrm{ae}}}}=0$, so a successful estimate satisfies $\widetilde{a}\leq\eps<3\eps/2$. For our random family, with probability at least $\frac{2}{3}$, we have $\sqrt{\bra{\psi_{\mathrm{ae}}}U_H(1)^\dagger \Pi_{\mathrm{ae}}U_H(1)\ket{\psi_{\mathrm{ae}}}}>\alpha/(2\sqrt{2})\geq 3\eps$. For every such Hamiltonian, a successful estimate satisfies
    \[
    \widetilde{a}\geq \sqrt{\bra{\psi_{\mathrm{ae}}}U_H(1)^\dagger \Pi_{\mathrm{ae}}U_H(1)\ket{\psi_{\mathrm{ae}}}}-\eps >(3-1)\eps>\frac{3\eps}{2}.
    \]
    
    The binary algorithm outputs zero with probability at least $9/10$ on the zero Hamiltonian, and far with probability at least $9/10\cdot 2/3=3/5$ on each of these random inputs. In other words, if the purified time evolution when the algorithm inputs inputs the zero Hamiltonian and $H$ are $\ket{\Psi_Z(t)}$ and $\ket{\Psi_F(t)}$, respectively, and if the final binary measurement determined by $\{P_Z,P_F\}$, we have
    \begin{align*}
        \bra{\Psi_Z(t_f)}(P_Z\otimes I_{\mathcal I})\ket{\Psi_Z(t_f)} &\geq\frac{9}{10}, \\
        \bra{\Psi_F(t_f)}(P_F\otimes I_{\mathcal I})\ket{\Psi_F(t_f)} &\geq  \frac{3}{5},
    \end{align*}
    and 
    \[
    \bra{\Psi_F(t_f)}(P_F\otimes I_{\mathcal I})\ket{\Psi_F(t_f)} -\bra{\Psi_Z(t_f)}(P_F\otimes I_{\mathcal I})\ket{\Psi_Z(t_f)} \geq\frac{3}{5}-\frac{1}{10}=\frac{1}{2}.
    \]
    Applying \cref{thm:mainresult} therefore yields 
    \[
    T = \Omega(1/\alpha^2).
    \]
    Finally, as $\widetilde T=(\alpha/6\sqrt{2}\eps)T$, it follows that 
    \[
    \widetilde T=\Omega\left(\frac{1}{\alpha\eps}\right)=\Omega\left(\min\left\{\frac{1}{\eps^2},\frac{2^{n/2}}{\eps}\right\}\right).
    \]\qedhere
\end{proof}

\subsection*{Acknowledgements}
We thank Richard R. Allen, Matthias C. Caro, Yanlin Chen, Tom Gur, and Lowe Angus Lowe for helpful discussions and for sharing their concurrent work with us~\cite{allen2026power}. We also thank Arne Darras, Benoît Dubus and Jérémie Roland for useful discussions related to the continuous-time query model. F.E.G. is supported by the French National Research Agency award number “ANR-22- PNCQ-0002”. J.L. acknowledges support from the Harvard Quantum Initiative and the Kwanjeong Foundation, and this work benefited from interactions with the NSF AI Institute for Artificial Intelligence and Fundamental Interactions (IAIFI), which is supported by the National Science Foundation under Cooperative Agreement PHY-2525568. S.Z. was supported by the French PEPR integrated projects EPIQ (ANR-22-PETQ-0007) and HQI (ANR-22-PNCQ-0002). 
\paragraph{AI statement.} The construction of the hard distribution and the proof strategy for bounding the rate of change of the adversary progress function were initially proposed by ChatGPT~$5.5$ during discussions with the authors in June 2026. AI tools were also used to assist with the exposition. The authors checked and revised the resulting arguments and take full responsibility for the mathematical claims and the final manuscript.

\bibliographystyle{alpha}
\bibliography{references}

\end{document}